\let\proof\@undefined
\let\endproof\@undefined
\newcommand{\vnorm}[1]{\left|\left|#1\right|\right|}
\DeclareMathOperator*{\argmin}{argmin}
\newcommand{\D}{\mathcal{D}}
\newcommand{\R}{\mathbb{R}}
\renewcommand{\S}{\mathcal{S}}
\newcommand{\change}[1]{#1}
\theoremstyle{plain}
\newtheorem{thm}{Theorem}
\newtheorem{corollary}{Corollary}
\newtheorem{prop}{Proposition}
\newtheorem{lemma}{Lemma}
\theoremstyle{definition}
\newtheorem{definition}{Definition}
\newtheorem{example}{Example}
\theoremstyle{remark}
\newtheorem{rem}{Remark}
\def\BibTeX{{\rm B\kern-.05em{\sc i\kern-.025em b}\kern-.08em
    T\kern-.1667em\lower.7ex\hbox{E}\kern-.125emX}}
\begin{document}
\title{Characterizing Safety: Minimal Control Barrier Functions from Scalar Comparison Systems}

\author{Rohit Konda, Aaron D. Ames, and Samuel Coogan 
\thanks{This research was supported by NSF Awards \# 1544332 and \#1749357.}
\thanks{R. Konda is with the School of Electrical and Computer Engineering, Georgia Institute of Technology, Atlanta, GA 30332. \texttt{rkonda6@gatech.edu}}
\thanks{A. D. Ames is with the Dept. of Mechanical and Civil Engineering, California Institute of Technology, Pasadena, CA. \texttt{ames@caltech.edu}}
\thanks{S. Coogan is with the School of Electrical and Computer Engineering and the School of Civil and Environmental Engineering, Georgia Institute of Technology, Atlanta, GA 30332. \texttt{sam.coogan@gatech.edu}}
}

\maketitle

\begin{abstract}
Verifying set invariance has classical solutions stemming from the seminal work by Nagumo, and defining sets via a smooth barrier function constraint inequality results in computable flow conditions for guaranteeing set invariance.
While a majority of these historic results on set invariance consider flow conditions on the boundary, this paper fully characterizes set invariance through \emph{minimal barrier functions} by directly appealing to a comparison result to define a flow condition over the entire domain of the system. 
A considerable benefit of this approach is the removal of regularity assumptions of the barrier function. This paper also outlines necessary and sufficient conditions for a valid differential inequality condition, giving the minimum conditions for this type of approach.
We also show when minimal barrier functions are necessary and sufficient for set invariance.
\end{abstract}

\section{Introduction}
\label{sec:Introduction}

In the context of dynamical systems, safety has become synonymous with \emph{set invariance}, the property that state trajectories of a system are contained within a given subset of the state space; e.g., see the textbook \cite{blanchini2008set}. Intuitively, invariance can be established by ensuring that a system's vector field evaluated on the boundary of the candidate invariant set is always sub-tangent to the set so that trajectories cannot escape.
The main technical challenge of this approach is in defining an appropriate notion of sub-tangency applicable to general sets and finding conditions that extend over the entire set so they can be used for controller synthesis. Recent work on (control) barrier functions provided conditions for set invariance \cite{ames2014control,ames2017control}, subject to regularity assumptions on the set. The question this paper addresses is: {\it Are these the strongest possible conditions for set invariance?}

The main result of this paper is necessary and sufficient conditions on set invariance that are \emph{minimal} in that they are the least restrictive conditions needed to ensure set invariance.  To obtain this result, we begin considering comparison results for scalar systems which lead to a notion of a minimal solution.  This motivates the introduction of a \emph{minimal barrier function} which leverages a comparison result for scalar systems.  Minimal barrier functions are necessary and sufficient for set invariance and, importantly, they do not require the regularity conditions imposed by the original formulation of barrier functions. Finally, minimal control barrier functions are introduced wherein state dependent input constraints and controller synthesis are considered.    

\change{There is a long and rich history of establishing conditions for set invariance, starting with the seminal work by Nagumo \cite{nagumo1942lage} continuing with Bony \cite{bony1969principe}, Brezis \cite{brezis1970characterization}, and others \cite{ladde1974flow, redheffer1972theorems}.}
The modern literature has predominately focused on \change{extending these classical results to when $\S =\{x:h(x)\geq 0\} $, a subset of the domain $\D \subset \R^n$, is defined by a smooth output function $h: \D \to \R$.}
The most visible example of this is \emph{barrier certificates}, which were first introduced to verify safety properties of hybrid systems in \cite{prajna2004safety}. \change{Directly invoking Nagumo's theorem gives the familiar condition :  $\frac{\partial h}{\partial x}f(x) \geq 0$ on the boundary of $\S$ implies invariance of $\S$, for the state flow $f$.} Extensions of barrier certificates have been plenty, see e.g. \cite{prajna2007framework, wisniewski2015converse}, but a major assumption is \change{regularity of $h$: specifically} that the gradient $\frac{\partial h}{\partial x}$ on the boundary does not vanish and corresponds to the exterior normal vector of $\S$. This assumption is necessary, as a simple counterexample is given by \change{$h(x) = x^3$ and $\dot{x} = -1$. This example is further detailed in Example \ref{ex:counterexample} and also appears in \cite{prajna2004safety}.}

\change{The alternative approach is to enforce a flow constraint over the entire domain: $\frac{\partial h}{\partial x}f(x) \geq -\phi(h(x)) \ \forall x \in \D$ for a scalar function $\phi$. From the conception of the Lyapunov-like flow constraint with $\phi \equiv \mathbf{0}$, as in \cite{prajna2004safety}, significant work has been undertaken to expand the class of functions $\phi$ sufficient to guarantee invariance of $\S$, e.g. see \cite{aubin2009viability, kong2013exponential}. With a view towards obtaining tighter conditions for $\phi$, a new form of (control) barrier functions was recently introduced in \cite{ames2017control}, where $\phi$ is required to be an extended class $\mathcal{K}$ function, \emph{i.e.}, it is strictly increasing and $\phi(0) = 0$.
Importantly, these conditions are necessary and sufficient for set invariance in the case
when $\S$ is compact and $0$ is a regular value of $h$.}

Yet the question remains: can these assumptions, especially with respect to the regularity of $h$, be relaxed further and still guarantee set invariance?
Answering this question is important as it allows for the verification for a larger set of invariance specifications for a given system \change{including significant classes of non-regular sets, such as points, limit cycles, subspaces, etc}.
This leads to the main contribution of the paper: the largest possible set of functions, $\mu$, in which to lower bound the flow via $\frac{\partial h}{\partial x} f(x) \geq  -\mu(h(x)) \ \ \forall x \in \D$.

\begin{shortonly}
Complete proofs of all results are contained in the extended version\footnote{The extended version is available at \texttt{arxiv.org/abs/1908.09323}}. Proofs are presented below for a few main results. The extended version also contains several appendices that extend the basic results here to time-varying systems, \change{present stability results using minimal barrier functions,} compares to other boundary conditions for invariance in the literature, and presents further details regarding necessary conditions to guarantee the existence of a continuous comparison function. 
\end{shortonly}

\begin{extendedonly}
\subsection{Overview of Contributions}

In this paper, we introduce novel conditions for ensuring invariance of a set defined via a smooth barrier function. A major objective of this paper is to characterize conditions, that are, in a certain sense, the minimum conditions required on the resulting differential inequality to certify invariance for a smooth barrier function $h(x)$. The remainder of this paper is organized as follows: 

\vspace{0.1cm}
\noindent \underline{Section \ref{sec:Background}:} introduces standard definitions associated with solutions of autonomous systems and positive invariance. 

Additionally, we present two main theorems for minimal barrier functions.

\begin{itemize}
    \item{
    Theorem \ref{thmbarrier} establishes that $\dot{h} = \frac{\partial h}{\partial x}(x)f(x) \geq -\mu(h(x))$ globally for $x \in \D$ is sufficient for establishing invariance of $\S=\{x:h(x)\geq 0\}$ when $\mu(\cdot)$ is a \emph{minimal function}, that is, solutions of the initial value problem $\dot{w}=-\mu(w)$, $w(0)=0$ remain positive for all time. Notably, $\mu$ need not be Lipschitz continuous. The proof of Theorem \ref{thmbarrier} relies on a comparison result tailored for non-Lipschitz vector fields with potentially nonunique solutions. 
    }
    \item{
    Theorem \ref{thm:minimal} presents necessary and sufficient conditions for verifying that a function $\mu$ is a minimal function. These conditions recover as a special case the instance when $\mu$ is locally Lipschitz.
    }
\end{itemize}

\noindent We also give several examples on the application of minimal barrier functions, including Example \ref{ex:bump}, which delineates an instance when a Lipschitz vector field for the comparison system is impossible. Lastly, we compare minimal barrier functions to prior work in zeroing barrier functions. 

\vspace{0.1cm}
\noindent \underline{Section \ref{sec:Nagumo}:} presents a discussion on regularity and outlines necessary conditions on the existence for minimal functions. More concretely, the main result is:

\begin{itemize}
    \item{
    Under regularity conditions on $\S$, namely, twice-differentiability of $h$, compactness assumptions of level sets of $h$, and the requirement that $\frac{\partial h}{\partial x}(x)\neq 0$ for all $x$ such that $h(x)=0$, Theorem \ref{thm:existloc} proves that a locally Lipschitz minimal function $\mu$ always exists satisfying the barrier function $\frac{\partial h}{\partial x} f(x)\geq -\mu(h(x))$ for all $x$ if $\S$ is invariant. 
    }
\end{itemize}

\vspace{0.1cm}
\noindent \underline{Section \ref{sec:Control}:} considers the case of control systems and controller synthesis through the introduction of minimal control barrier functions.  Systems with state dependent input constraints are considered in this case, and several results are presented on guaranteeing continuity properties of a viable controller. 

\vspace{0.1cm}
\noindent \underline{Section \ref{sec:Conclusion}:} presents concluding remarks.

\vspace{0.1cm}
\noindent \underline{Appendix} also contains discussion on nonautonomous versions of minimal barrier functions, presents certain stability results using MBFs, recalls several classic tangent conditions for verifying invariance and outlines their connection to minimal barrier functions, and discusses conditions for existence of a continuous comparison function.
\end{extendedonly}

\section{Minimal Barrier Functions}
\label{sec:Background}


We study the system
\begin{align}
  \label{eq:sys}
  \dot{x} = f(x)
\end{align}
with state $x\in\D$ where $\D \subseteq \mathbb{R}^n$ is assumed to be an open set and $f : \D \to \mathbb{R}^n$ is assumed to be continuous.
\begin{shortonly}
Under this assumption, for any initial condition $x(0) = x_0$, there exists a maximum time interval of existence $I[x(\cdot)]=[0,\tau_{\max}[x(\cdot)])$ \change{ with $\tau_{\max}[x(\cdot)] > 0$} in which the solution $x(t)$ is guaranteed to exist.
\end{shortonly}

The Lie derivative of $h$ along the vector field $f$ is denoted $L_f h:\D \to \mathbb{R}$ and defined by $L_f h(x):=\frac{\partial h}{\partial x}(x) f(x)$. We denote standard Euclidean norm by $\| \cdot \|$.
\begin{extendedonly}
For a set $\S \subseteq \D$, we refer $\partial \S$ as the boundary of $\S$, $\S^{\circ}$ as the interior of $\S$, and $\overline{\S}$ as the closure of $\S$ with the standard topological definitions.

In general, continuity of $f$ ensures solutions exist for \eqref{eq:sys}, but they need not be unique.
A solution $x(t)$ to \eqref{eq:sys} with the initial condition $x(0)=x_0\in \D$ defined for $t\in [0,\tau)$ is called \emph{maximal} if it cannot be extended for time beyond $\tau$ \cite{hartman1964ordinary}. Given a maximal solution $x(t)$ defined on $[0,\tau)$, we  write $\tau_{\max}[x(\cdot)]$ to denote the right (maximal) endpoint $\tau$ of its interval of existence and we write $I[x(\cdot)]=[0,\tau_{\max}[x(\cdot)])$ to indicate the (maximal) interval of existence of $x(t)$.

To avoid cumbersome notation, we often write simply $\tau_{\max}$ instead of $\tau_{\max}[x(\cdot)]$ when $x(t)$ is clear, \emph{e.g.}, $I[x(\cdot)]=[0,\tau_{\max})$. Further, we write $t\geq 0$ instead of $t\in I[x(\cdot)]$ when clear from context. The system \eqref{eq:sys} is \emph{forward complete} if $\tau_{\max}[x(\cdot)]=\infty$ for all maximal solutions $x(t)$.
\end{extendedonly}

Throughout this paper, we will study invariance of sets defined  as $\S = \{x \in \D : h(x) \geq 0 \}$ for a continuously differentiable function $h : \D \to \mathbb{R}$.

A set $\S \subseteq \D$ is \emph{positively invariant} for \eqref{eq:sys} if, for any $x_0\in \S$, all solutions $x(t)$ with $x(0)=x_0$ satisfy $x(t)\in\S$ for all $t\in I[x(\cdot)]$. 

\begin{extendedonly}
A set $\S\subseteq \D$ is \emph{weakly positive invariant} for \eqref{eq:sys} if, for any $x_0\in \S$, there exists at least one maximal solution $x(t)$ with $x(0)=x_0$ satisfying $x(t)\in \S$ for all $t\in I[x(\cdot)]$. 

If solutions are unique for \eqref{eq:sys} for every $x_0 \in \D$, then the definitions of positive invariance and weak positive invariance coincide. While we are almost exclusively interested in positive invariance in this paper, we will occasionally reference the weaker formulation.
\end{extendedonly}

By studying sets $\S$ defined by inequality constraints of a smooth function $h$, we can develop Lyapunov-like conditions on the time evolution of the scalar value $h$ over the whole domain $\D$. 
In particular, we observe that if $h(x(t)) \geq 0$ can be assured for all $t \geq 0$ and for all initial conditions $x_0 \in \S$, then $\S$ is positively invariant.
In contrast, previous results on barrier functions (for example, {\cite{prajna2004safety}}) focus on extending Nagumo's Theorem, directly \change{verifying that the flow of the system is sub-tangent} to the set $\S$.

\begin{shortonly}
First we define a \emph{minimal solution} $\Tilde{w}(t)$ defined on $[0, \tau)$ for the scalar initial value problem 
\begin{equation}
    \label{eq:IVP}
    \dot{w}=g(w)\text{, } \ w(0) = w_0\in\mathbb{R},
\end{equation}
\change{with $g: \R \to \R$ a continuous function,} as a solution such that, for any other solution $w'(t)$ defined on $[0,\tau)$, $\tilde{w}(t)\leq w'(t)$ for all $t \in [0, \tau)$.

Minimal solutions \change{will be} fundamental in characterizing differential inequalities for systems that do not necessarily have unique solutions. Relaxing uniqueness assumptions is critical, as even systems and barrier functions defined by polynomials can induce nonunique solutions in the differential inequality, as seen in Example \ref{ex:counterexample}. With this, we recall the following differential inequality.
\end{shortonly}

\begin{extendedonly}
First we recall some standard notions of solutions for first order differential equations. 

Consider the initial value problem
\begin{equation}
    \label{eq:IVP}
    \dot{w}=g(w) \quad \quad \quad w(0) = w_0
\end{equation}
where $g:W\to \mathbb{R}$ is a continuous function defined on an open set $W \subseteq \mathbb{R}$ and $w_0\in W$. Again, continuity of $g$ guarantees existence but not uniqueness of solutions to \eqref{eq:IVP}. The following definitions appear in, \emph{e.g.}, \cite [Section 2.2]{Pachpatte:1997oy}. 

\begin{itemize}
    \item 
    A differentiable function $w(t)$ defined on some interval $[0,\tau)$ is a \emph{solution} of \eqref{eq:IVP} if $w(t) \in W$ for $t \in [0,\tau)$, $w(0)=w_0$, and $\dot{w}(t)=g(w(t))$ for all $t \in [0, \tau)$. 
    \item
    A solution $\Tilde{w}(t)$ is a \emph{minimal solution} of \eqref{eq:IVP} on $[0, \tau)$ if, for any other solution $w'(t)$ defined  on $[0,\tau)$, $\tilde{w}(t)\leq w'(t)$ for all $t \in [0, \tau)$.
\end{itemize}

The existence of minimal solutions is guaranteed by the fact that $g(w)$ is continuous on the domain $W$ \cite[Thm 1.3.2]{lakshmikantham1969differential}, while uniqueness of minimal solutions is guaranteed by properties of the standard ordering on $\R$.

Minimal solutions are fundamental for establishing comparison results in scalar differential inequalities. In particular, the following proposition presents a comparison result similar to the one outlined in \cite [Thm 6.3]{bainov2013integral} for which a solution to a differential inequality is bounded below by the minimal solution of the corresponding comparison system. For completeness, a proof is provided that follows closely to the proof of \cite [Thm 6.3]{bainov2013integral}.
\end{extendedonly}

\begin{prop}\cite [Thm 6.3]{bainov2013integral}
  \label{prop:compare}
  Let $\Tilde{w}(t)$ be a minimal solution to the the initial value problem \eqref{eq:IVP} with domain $[0,\tau)$ and $g$ being continuous. If $\eta(t)$ is any differentiable function defined on $[0,\tau)$ such that
    $\dot{\eta}(t)\geq g(\eta(t))\ \text{ for all } t\in [0,\tau)$ and $\eta(0)\geq w_0$, 
  then
    $\eta(t)\geq \Tilde{w}(t) \text{ for all }t\in [0,\tau)$.
\end{prop}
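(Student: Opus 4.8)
The plan is to prove the proposition by the classical perturbation-and-comparison argument, which is the standard route for comparison theorems when the comparison system $\dot w = g(w)$ may fail to have unique solutions. The idea is to approximate $\tilde w$ from below by solutions of strictly dissipative perturbations of \eqref{eq:IVP}, to compare $\eta$ against each perturbed solution by a first-crossing argument, and then to pass to the limit.

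First I would fix an arbitrary $T\in(0,\tau)$; since $T$ is arbitrary, it suffices to prove $\eta(t)\geq\tilde w(t)$ on $[0,T]$. For $\epsilon>0$ consider the perturbed initial value problem $\dot w_\epsilon=g(w_\epsilon)-\epsilon$ with $w_\epsilon(0)\leq w_0$. I would then invoke the standard characterization of minimal solutions for scalar equations (see, e.g., \cite{lakshmikantham1969differential}): for all sufficiently small $\epsilon$ these perturbed problems have solutions on $[0,T]$, and one may select them so that $w_\epsilon(t)\to\tilde w(t)$ uniformly on $[0,T]$ as $\epsilon\downarrow 0$. Intuitively the perturbed flows are driven strictly downward and increase monotonically to the smallest solution of the unperturbed equation as the perturbation vanishes.

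Next I would show $\eta(t)\geq w_\epsilon(t)$ for every $t\in[0,T]$ and every such $\epsilon$. Suppose not, and let $t_1\in[0,T]$ be the infimum of the (nonempty, by assumption) set of times at which $\eta<w_\epsilon$. Then $\eta(t)\geq w_\epsilon(t)$ for $t\leq t_1$ and, by continuity, $\eta(t_1)=w_\epsilon(t_1)$; hence the differentiable function $\eta-w_\epsilon$ is nonnegative on $[0,t_1]$ and vanishes at $t_1$, which forces $\dot\eta(t_1)-\dot w_\epsilon(t_1)\leq 0$ (using the left derivative if $t_1>0$, and the right derivative if $t_1=0$, in which case $\eta(0)=w_0$). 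On the other hand,
\[
\dot\eta(t_1)-\dot w_\epsilon(t_1)\;\geq\; g(\eta(t_1))-\bigl(g(w_\epsilon(t_1))-\epsilon\bigr)\;=\;\epsilon\;>\;0,
\]
a contradiction. Therefore $\eta\geq w_\epsilon$ on $[0,T]$, and letting $\epsilon\downarrow 0$ together with the uniform convergence $w_\epsilon\to\tilde w$ yields $\eta(t)\geq\tilde w(t)$ on $[0,T]$. Since $T<\tau$ was arbitrary, the conclusion holds on all of $[0,\tau)$.

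The essential point — and the reason a perturbation is introduced at all — is the strict inequality in the displayed contradiction: comparing $\eta$ directly against a solution of the unperturbed equation only yields $\dot\eta(t_1)-\dot{\tilde w}(t_1)\geq 0$, which is consistent with $\leq 0$ and produces no contradiction; indeed, plain comparison against an arbitrary solution of \eqref{eq:IVP} can fail when solutions are nonunique. I expect the main technical obstacles to be (i) ensuring the perturbed solutions are defined on the whole interval $[0,T]$ for small $\epsilon$ and (ii) the limiting characterization $w_\epsilon\to\tilde w$ — the latter being precisely where minimality of $\tilde w$ enters; both are standard facts about scalar comparison systems that may be cited rather than reproved.
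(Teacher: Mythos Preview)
Your proposal is correct and follows essentially the same perturbation-and-comparison route as the paper's proof: the paper also restricts to a compact subinterval, introduces perturbed problems $\dot r_n=g(r_n)-\epsilon_n$, runs a first-crossing contradiction to get $\eta>r_n$, and then shows $r_n\to\tilde w$. The only notable differences are that the paper perturbs the initial condition as well ($r_n(0)=w_0-\epsilon_n$), which cleanly forces the first-crossing time to be strictly positive and sidesteps your $t_1=0$ edge case, and that the paper actually proves the uniform convergence $r_n\to\tilde w$ (via a Cauchy-criterion estimate and monotonicity) rather than citing it.
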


\begin{extendedonly}
\begin{proof}
We initially show $\eta(t) \geq \Tilde{w}(t)$ on any compact time interval $[0, \tau_f]$ with $\tau_f < \tau$. Let $\{\epsilon_n\}$ be a strictly monotonically decreasing sequence with $\lim_{n \to \infty}{\epsilon_n} = 0$ and $\epsilon_n > 0$ for all $n$. We have that for all $n$,
\begin{equation}
    \dot{\eta}(t) \geq g(\eta(t)) > g(\eta(t)) - \epsilon_n \quad \forall t \in [0, \tau_f]
\end{equation}
 and 
 \begin{equation}
     \eta(0) \geq w_0 > w_0 - \epsilon_n.
 \end{equation}
Let $\{r_n\}$ denote a sequence of solutions that satisfy the initial value problem 
\begin{equation}
    \label{eq:minforcompare}
    \dot{r}_n(t) = g(r_n(t)) - \epsilon_n \quad \quad r_n(0) = w_0 - \epsilon_n
\end{equation}
for each $n$. By \cite[Thm 6.2]{bainov2013integral}, the minimal solution for \eqref{eq:minforcompare} exists on $[0, \tau_f]$ for large enough $n$, and to simplify notation, we assume this holds for all $n$. 

We first claim that $\eta(t) > r_n(t)$ for all $t \in [0,\tau_f]$ for any $n$. Suppose for contradiction that $\eta(t) \leq r_n(t)$ for some $t \in [0,\tau_f]$ for some $n$. Then 
\begin{equation}
    T := \inf\{t \in (0,\tau_f] : \eta(t) \leq r_n(t) \}
\end{equation}
is well defined and $T \in (0, \tau_f]$  since $\eta(0) > w_0 - \epsilon_n$. This fact coupled with continuity of $r_n(t)$ and $\eta(t)$ shows that $r_n(T) = \eta(T)$ and $\eta(t) > r_n(t)$ for $t \in [0, T)$. Moreover, $\dot{\eta}(T) > \dot{r}_n(T)$ and since $r_n(T) = \eta(T)$, this implies that $\eta(t) < r_n(t)$ on the interval $(T - \varepsilon, T)$ for some $\varepsilon > 0$. However, this contradicts the definition of $T$, proving the claim. Thus, letting $n \to \infty$, we have $\eta(t) \geq r(t)$ for $t \in [0, \tau_f]$ with $r(t) = \lim_{n \to \infty}{r_n(t)}$. 

Now we claim that $\lim_{n \to \infty}{r_n}$ converges uniformly to some function $r$ on $[0, \tau_f]$ and that $r$ is the minimal solution to \eqref{eq:IVP}. 
Note that because $\{\epsilon_n\}$ is strictly monotone, $\dot{r}_{n+1}(t) > \dot{r}_{n}(t)$ for $t \in [0, \tau_f]$ and $r_{n+1}(0) > r_n(0)$, it holds that $r_{n+1}(t) > r_n(t)$ for $t \in [0, \tau_f]$. By similar reasoning, $r_n(t)$ is also bounded above by any solution of \eqref{eq:IVP}, and in particular by $\Tilde{w}(t)$, so the limit $r(t)$ exists for all $t \in [0, \tau_f]$. Solving for the solution gives 
\begin{equation}
    r_n(t) = w_0 + \int_{0}^{t}{g(r_n(s)) ds} - \epsilon_n(1 + t).
\end{equation}
Hence for $n < m$, 
\begin{align}
\label{eq:cauchy}
    \| r_n(t) - r_m(t) \| \leq &(\epsilon_n - \epsilon_m)(1 + \tau_f) \ + \\  \nonumber &\int_{0}^{\tau_f}{\| g(r_n(s)) - g(r_m(s)) \| ds}
\end{align}
Because $g$ is continuous, it is uniformly continuous on the compact set 
\begin{equation}
    \{z : \min_{t \in [0, \tau_f]}{r_1(t)} \leq z \leq \max_{t \in [0, \tau_f]}{\Tilde{w}(t)} \}.
\end{equation}

Together with the proposed Cauchy criterion \eqref{eq:cauchy}, this implies uniform convergence of $\{r_n\}$ to $r$. Furthermore letting $n \to \infty$ gives 
\begin{equation}
    r(t) =  w_0 + \int_{0}^{t}{g(r_n(s)) ds},
\end{equation}
so $r$ is a solution for \eqref{eq:IVP}. As established previously, $r$ is upper bounded by any solution to \eqref{eq:IVP}, so $r($ is necessarily the minimal solution for \eqref{eq:IVP}, i.e. $r(t) = \Tilde{w}(t)$ and $\eta(t) \geq r(t) = \Tilde{w}(t)$ holds for all $t\in[0, \tau_f]$.

Finally we show the result holds over $[0, \tau)$ with $\tau$ potentially being $\infty$. Suppose for contradiction, $\eta(t) < \Tilde{w}(t)$ for some $t \in [0, \tau)$ and let 
\begin{equation}
    \mathcal{T} := \inf \{t \in [0, \tau) : \eta(t) < \Tilde{w}(t) \}.
\end{equation}
Consider some $\tau_f$ such that $\mathcal{T} < \tau_f < \tau$. Since the minimal solution $\Tilde{w}(t)$ also exists on $[0, \tau_f]$, the preceding argument implies
$\eta(t) \geq \Tilde{w}(t)$ on $[0, \tau_f]$, contradicting the definition of $\mathcal{T}$ and proving $\eta(t) \geq \Tilde{w}(t)$ for all $t \in [0, \tau)$.
\end{proof}
\end{extendedonly}

Motivated by our interest in using scalar differential equations as barrier functions, we are especially interested in scalar systems for which minimal solutions remain nonnegative when initialized at the origin.

\begin{definition}
\label{def:minimal}
A continuous function $\mu: \R \to \R$ is a \emph{minimal function} if the minimal solution  $\Tilde{w}(t)$ defined on $t \in [0,\tau)$ for the initial value problem $\dot{w} = -\mu(w)$, $w(0)=0$ satisfies $\tilde{w}(t)\geq 0$ for all $t\in [0, \tau)$.
\end{definition}

With minimal functions introduced, we can now describe a corresponding barrier function condition.

\begin{definition}
\label{def:MBF}
For the system in \eqref{eq:sys}, a continuously differentiable function $h:\D \to \R$ is a \emph{minimal barrier function} (MBF) if there exists a minimal function $\mu$ that satisfies
\begin{equation}
    \label{eq:thm}
    L_f h(x) \geq -\mu(h(x)) \quad \forall  x \in \D,
\end{equation}
\change{where $L_f h(x) = \frac{\partial h}{\partial x}(x) f(x)$ denotes the Lie derivative.}
\end{definition}

The notion of a minimal barrier function allows us to then establish invariance of $\S$.

\begin{thm}
\label{thmbarrier}
Consider the system \eqref{eq:sys} and a nonempty $\S=\{x \in \D: h(x) \geq 0 \}$ for some continuously differentiable $h: \D \to \R$. If $h$ is a MBF as in Definition \ref{def:MBF}, then $\S$ is positively invariant.
\end{thm}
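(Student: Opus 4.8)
The plan is to take an arbitrary $x_0 \in \S$ and an arbitrary maximal solution $x(\cdot)$ of \eqref{eq:sys} with $x(0)=x_0$, and to show that $h(x(t)) \ge 0$ for all $t \in I[x(\cdot)]$; positive invariance of $\S = \{x : h(x)\ge 0\}$ then follows immediately from the definition. The bridge from the $n$-dimensional flow to the scalar comparison machinery is the function $\eta(t) := h(x(t))$: since $h$ is continuously differentiable and $x(\cdot)$ solves \eqref{eq:sys}, $\eta$ is differentiable with $\dot{\eta}(t) = L_f h(x(t))$, and the MBF inequality \eqref{eq:thm} of Definition \ref{def:MBF} gives $\dot{\eta}(t) \ge -\mu(h(x(t))) = -\mu(\eta(t))$. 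Thus $\eta$ is a subsolution of the scalar comparison equation $\dot{w} = -\mu(w)$ associated with the minimal function $\mu$, with $\eta(0) = h(x_0) \ge 0$.

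I would then argue by contradiction. Suppose $h(x(t_1)) < 0$ for some $t_1 \in I[x(\cdot)]$ and set $t_0 := \sup\{\,t \in [0,t_1] : h(x(t)) \ge 0\,\}$. Continuity of $t \mapsto h(x(t))$ together with $h(x(0)) \ge 0$ forces $t_0 \in [0,t_1)$, $h(x(t_0)) = 0$, and $h(x(t)) < 0$ for all $t \in (t_0, t_1]$. Now restart the comparison at $t_0$: the shifted function $\hat{\eta}(t) := h(x(t_0+t))$ satisfies $\dot{\hat{\eta}}(t) \ge -\mu(\hat{\eta}(t))$ with $\hat{\eta}(0) = 0$, while $\hat{\eta}(t) < 0$ for all $t \in (0, t_1 - t_0]$.

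Let $\Tilde{w}$ be the minimal solution of the initial value problem $\dot{w} = -\mu(w)$, $w(0)=0$; since $\mu$ is continuous, this minimal solution exists on a nontrivial interval $[0,\tau_w)$, and because $\mu$ is a minimal function (Definition \ref{def:minimal}) we have $\Tilde{w}(t) \ge 0$ there. Choose $s$ with $0 < s < \min(\tau_w, \, t_1 - t_0)$. Applying Proposition \ref{prop:compare} on $[0,s]$ — with $g = -\mu$, $w_0 = 0$, and $\eta$ taken to be $\hat{\eta}$, and using that the restriction of $\Tilde{w}$ remains the minimal solution on $[0,s]$ — yields $\hat{\eta}(s) \ge \Tilde{w}(s) \ge 0$. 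But $t_0 + s \in (t_0, t_1)$, so $\hat{\eta}(s) = h(x(t_0+s)) < 0$, a contradiction. Hence $h(x(t)) \ge 0$ on all of $I[x(\cdot)]$, which proves the theorem.

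The one genuinely delicate point — and the reason for restarting at the putative first exit time $t_0$ rather than running a direct global comparison from $t=0$ — is that the minimal solution $\Tilde{w}$ of the comparison system need not exist on all of $I[x(\cdot)]$ (it may have a finite escape time, or $[0,\tau_w)$ may simply be shorter than $I[x(\cdot)]$), and Definition \ref{def:minimal} only controls the minimal solution launched from the origin. Localizing to an arbitrarily short interval $(t_0, t_0+s)$ immediately past the first exit sidesteps both issues at once. The remaining ingredients — differentiability of $\eta$ along the flow, the supremum argument pinning down $t_0$, and the standard fact that restrictions of minimal solutions are minimal — are routine. One could alternatively first establish that the minimal solution from \emph{any} nonnegative initial value stays nonnegative (by applying Proposition \ref{prop:compare} to $\Tilde{w}$ itself against the origin-based minimal solution) and then run a single global comparison, but the localized contradiction argument above is more self-contained.
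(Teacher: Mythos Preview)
Your proof is correct, but it takes a genuinely different route from the paper's. The paper runs a \emph{single global comparison} on all of $[0,\tau_{\max})$: it launches the minimal solution $\Tilde{w}$ of $\dot w=-\mu(w)$, $w(0)=0$, and first argues that $\Tilde{w}$ must exist on the full interval $[0,\tau_{\max})$ --- otherwise, by a continuation/blow-up result for scalar ODEs, $\Tilde{w}(t)\to\infty$ at the finite escape time $\tau^*<\tau_{\max}$, which (via Proposition~\ref{prop:compare} on $[0,\tau^*)$) would force $h(x(t))\to\infty$, contradicting continuity of $h\circ x$ on the compact interval $[0,\tau^*]$. Once $\Tilde{w}$ is known to live on $[0,\tau_{\max})$, one application of Proposition~\ref{prop:compare} gives $h(x(t))\ge\Tilde{w}(t)\ge 0$ everywhere.

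Your first-exit-time contradiction sidesteps the extension/blow-up step entirely: by localizing to an arbitrarily short interval $(t_0,t_0+s)$ you only ever need the minimal solution on a small neighborhood of $0$, which is guaranteed by continuity of $\mu$ alone. This is more self-contained and arguably more elementary. The paper's approach, on the other hand, yields the slightly stronger global inequality $h(x(t))\ge\Tilde{w}(t)$ on all of $I[x(\cdot)]$ as a by-product, and its blow-up lemma is a reusable fact about the comparison system. Either argument is fully adequate here.
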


\begin{proof}
\change{
Let $x(t)$ be a solution defined on $[0, \tau_{\max})$ to \eqref{eq:sys} 
with any $x(0) = x_0 \in \S$. Observe that $h(x(0)) \geq 0$. Consider the comparison system $\dot{w} = -\mu(w)$ with $w(0) = 0$.
We first show that $\Tilde{w}(t)$ is defined on $[0, \tau_{\max})$ as well. Suppose that $\Tilde{w}(t)$ is only defined on $[0, \tau^*)$ for $\tau^* < \tau_{\max}$. Since $\mu$ is a minimal function, $\Tilde{w}(t) \geq 0$ and $\lim_{t\to\tau^*}\Tilde{w}(t)=\infty$ \cite[Corollary 1.1.2]{lakshmikantham1969differential}. By Proposition \ref{prop:compare}, $h(x(t))\geq \Tilde{w}(t)$ for $t \in [0, \tau^*)$, implying $\lim_{t\to \tau^*}h(x(t))=\infty$ and diverges. Since $x(t) \in \D$ for $t\in[0,\tau^*]$, $h(x(t))$ is a well-defined continuous map from $t\in[0,\tau^*]$ to $\R$. Contradiction ensues, as the image of $[0,\tau^*]$ under the continuous map $h$ is compact and therefore bounded, and the claim is shown.
Now, Proposition \ref{prop:compare} gives that $h(x(t)) \geq w(t) \geq 0$, implying that $x(t) \in \S$ for all \change{$t \in [0, \tau_{\max})$}. Therefore $\S$ is positively invariant.}
\end{proof}

The main theoretical component for Theorem \ref{thmbarrier} comes directly from the differential inequality in Proposition \ref{prop:compare}, rather than using the classical argument by Nagumo. 
In this regard, MBFs highlight the strong connection between set invariance and differential inequalities. 
\begin{shortonly}
\change{Furthermore, if the additional assumption that $\mu(w) \leq 0$ ($< 0$) for $w \leq 0$ holds, then the utility of the MBF is similar to a set-based Lyapunov function for guaranteeing stability (asymptotic stability). A more detailed discussion can be found in the Appendix B of the extended version.}
\end{shortonly}

\subsection{Necessary and sufficient conditions for minimal functions}

Relaxing the standard Lyapunov condition $L_f h \geq 0$ for all $x \in \D$ has been studied in a number of works, see e.g. \cite{kong2013exponential}, \cite{ames2017control}. By considering a larger class of comparison functions to lower bound the flow, i.e. $L_f h(x) \geq -\phi(h(x))$, it becomes possible to ensure invariance without requiring stability, allowing for a larger design space when constructing valid barrier functions. 

We remark that minimal functions, by definition, represent the largest possible class of comparison functions, \change{and in a  certain sense, are the most general class of functions that can be used in a barrier function condition.}
However, since minimal functions are implicitly defined through the resulting nonnegative solutions, checking if a function is indeed minimal is not immediately apparent. Thus the next theorem presents verifiable conditions on $\mu$ to ensure that it is a minimal function. It is important to note that since the next theorem is necessary and sufficient, it represents the tightest possible conditions for a minimal function.
The crux of the theorem comes from uniqueness results in \cite{agarwal1993uniqueness}.

\begin{thm}
\label{thm:minimal}
A continuous function $\mu : \R \to \R$ is a minimal function if and only if one of the following cases is satisfied:
\begin{enumerate}
    \item $\mu(0) < 0$ 
    \label{item:1}
    
    \item $\mu(0) = 0$ and there exists  $\varepsilon>0$ such that $\mu(w) \leq 0$ for all $w \in [-\varepsilon, 0)$
    \label{item:2}
    
    \item $\mu(0) = 0$ and for every $\varepsilon>0$, there exists some $w', w''$ in $[-\varepsilon, 0]$ such that $\mu(w') > 0$ and $\mu(w'') < 0$
    \label{item:3}
    
    \item $\mu(0) = 0$ and there exists $k > 0$ such that for all $\varepsilon$ with $0 < \varepsilon < k$, $\mu(w) \geq 0$ on $[-\varepsilon,0]$ and $-1/\mu(w)$ is not integrable on $[-\varepsilon, 0]$, i.e. $\int^{-\varepsilon}_{0}{-\frac{dw}{\mu(w)}}$ is divergent
    \label{item:4}
    
\end{enumerate}
\end{thm}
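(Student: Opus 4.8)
The plan is to characterize minimal functions by analyzing the behavior of the minimal solution $\tilde{w}(t)$ of $\dot w = -\mu(w)$, $w(0) = 0$ near $t = 0$, splitting into cases according to the sign of $\mu$ just below the origin. The key observation is that $\mu$ being a minimal function is exactly the statement that the minimal solution cannot dip below $0$; since $\tilde{w}(0) = 0$, this is a purely local question about whether solutions can immediately enter the region $w < 0$. I would first record two elementary facts: (a) if $\tilde w(t_0) > 0$ for some $t_0$ then $\tilde w(t) \ge 0$ for all $t \ge t_0$ on its interval of existence — indeed, a minimal solution starting at a positive value of a scalar ODE with continuous right-hand side cannot reach $0$ before time runs out, by the comparison principle (Proposition~\ref{prop:compare}) applied on $[t_0, \cdot)$ against the trivial case, or by a direct barrier argument; and (b) consequently, $\mu$ fails to be minimal if and only if the minimal solution is strictly negative on some interval $(0, \delta)$.

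Next I would handle the four cases. In case~\ref{item:1}, $\mu(0) < 0$ gives $\dot{\tilde w}(0) = -\mu(0) > 0$, so $\tilde w$ is strictly increasing near $0$ and stays positive; minimality holds, and moreover this forces any minimal function with a negative dip to have $\mu(0) = 0$, which is why cases~\ref{item:2}--\ref{item:4} all assume $\mu(0) = 0$. Conversely, if $\mu(0) > 0$ then $\dot{\tilde w}(0) < 0$ and $\tilde w$ immediately goes negative, so $\mu$ is not minimal — this shows the three listed $\mu(0)=0$ cases together with case~\ref{item:1} are the only candidates. For case~\ref{item:2}, $\mu \le 0$ on $[-\varepsilon, 0)$ means $w \equiv 0$ is itself a solution and the vector field points "up" (or is zero) throughout $[-\varepsilon,0)$; I would argue the minimal solution cannot cross strictly below $0$ because on the region $[-\varepsilon, 0)$ the dynamics $\dot w = -\mu(w) \ge 0$ prevent decrease — more carefully, if $\tilde w$ were negative on $(0,\delta)$ it would have to have decreased from $0$, but $\dot{\tilde w} \ge 0$ wherever $\tilde w \in [-\varepsilon, 0)$, a contradiction. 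For case~\ref{item:3}, the oscillation of the sign of $\mu$ in every left-neighborhood of $0$ is exactly the classical sign-change criterion ensuring uniqueness of the zero solution (this is where I would invoke the uniqueness results of \cite{agarwal1993uniqueness}); since the zero solution is unique, the minimal solution restricted to where it would be nonpositive must coincide with $0$, hence $\tilde w \ge 0$. Case~\ref{item:4} is the Osgood/Nagumo-type non-integrability condition: when $\mu \ge 0$ on $[-\varepsilon, 0]$ the only way a solution could leave $0$ downward is by the separation-of-variables integral $\int_0^{-\varepsilon} \frac{dw}{-\mu(w)}$ being finite (finite escape time into the negative region); divergence of this integral forbids it, so again $\tilde w \ge 0$.

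For the converse direction within $\mu(0) = 0$, I would argue the contrapositive: suppose $\mu$ is minimal but none of \ref{item:2}, \ref{item:3}, \ref{item:4} holds. Failing \ref{item:2} and \ref{item:3} simultaneously forces, for all small $\varepsilon$, that $\mu \ge 0$ on $[-\varepsilon, 0]$ (if $\mu$ took a strictly negative value arbitrarily close to $0$ but not a strictly positive one, that would be case~\ref{item:2}; taking both signs arbitrarily close is case~\ref{item:3}; so the remaining possibility is $\mu \ge 0$ throughout some $[-\varepsilon,0]$). Then failing \ref{item:4} means the integral $\int_0^{-\varepsilon} \frac{-dw}{\mu(w)}$ converges for some such $\varepsilon$, and standard ODE theory (constructing the solution via the inverse of $t \mapsto \int \ldots$) produces a genuine solution of $\dot w = -\mu(w)$, $w(0) = 0$ that becomes strictly negative for small $t > 0$; since the minimal solution lies below this one, $\tilde w$ is also strictly negative there, contradicting minimality. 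The main obstacle I anticipate is case~\ref{item:3}: making the sign-oscillation condition rigorously force $\tilde w \ge 0$ requires carefully citing and applying the uniqueness theorem of \cite{agarwal1993uniqueness} for the zero solution and then deducing that the minimal solution, which a priori could be a different (smaller) solution, cannot actually dip below $0$ — one must rule out a "partial" descent followed by return. The clean way is: on any interval where $\tilde w$ would be negative, compare with the zero solution on the sub-domain $W \cap (-\infty, 0]$ and use that uniqueness of the zero solution of the modified equation (e.g., $\dot w = -\mu(\min(w,0))$) pins $\tilde w$ to $0$ there. The other cases are routine barrier/separation-of-variables arguments.
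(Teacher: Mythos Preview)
Your overall structure matches the paper's: establish sufficiency of each case, then prove necessity by showing that if all four conditions fail one can explicitly construct a solution of $\dot w=-\mu(w)$, $w(0)=0$ that becomes strictly negative. The converse direction you sketch is essentially identical to the paper's.

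The main difference is in the sufficiency arguments for Cases~\ref{item:2}--\ref{item:4}. The paper unifies these with one elementary observation you do not exploit: every solution of an autonomous scalar ODE is monotone (if $w(t_1)=w(t_2)=w^*$ with opposite signs of $\dot w$ at the two times, that contradicts $\dot w=-\mu(w)$ depending only on $w$). With monotonicity in hand, Case~\ref{item:3} --- which you flag as the main obstacle --- becomes a two-line argument: if the minimal solution reached $b<0$ at time $\tau$, then by monotonicity it would be non-increasing on $[0,\tau]$ and sweep through every value in $[b,0]$; but by hypothesis there is $w''\in[b,0]$ with $\mu(w'')<0$, so at the time $w(t)=w''$ we have $\dot w>0$, contradicting the non-increase. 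No appeal to a uniqueness theorem or to the modified equation $\dot w=-\mu(\min(w,0))$ is needed. Your route through uniqueness of the zero solution can be made to work, but it is heavier machinery for what monotonicity dispatches immediately.

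A genuine (minor) error: your fact~(a) as justified is false. The claim ``a minimal solution starting at a positive value of a scalar ODE with continuous right-hand side cannot reach $0$'' fails for, e.g., $\dot w=-1$, and there is no ``trivial case'' to compare against via Proposition~\ref{prop:compare} without already knowing something about $\mu$. The statement of~(a) is nonetheless \emph{true} for the minimal solution of $\dot w=-\mu(w)$ with $w(0)=0$, but the correct reason is again monotonicity: once $\tilde w$ rises from $0$ to a positive value it is non-decreasing and cannot return. Since you only use~(a) to pass to~(b) and then argue each case directly anyway, this does not break your proof, but you should replace the justification.
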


\begin{extendedonly}
\begin{proof}
Let the system be $\dot{w} = -\mu(w)$ and the corresponding minimal solution be $\Tilde{w}(t)$ for the initial condition $w(0) = 0$.
We use $a.e.$ for abbreviation for almost everywhere.
First we show the sufficient direction. 

\emph{Case 1)} Since $\mu(0) \neq 0$, the minimal solution $\Tilde{w}(t)$ is unique by \cite[Thm 1.2.7]{agarwal1993uniqueness}. Because $\mu(0) < 0$ and $\mu$ is continuous, there exists  $\varepsilon>0$ such that $\mu(w) < 0$ everywhere on $[-\varepsilon,0]$
Suppose for contradiction, there exists $\tau > 0$ where $w(\tau) = b<0$. Integration of the system gives
\begin{equation}
\int^{w(\tau) = b}_{w(0) = 0}{-\frac{dw}{\mu(w)}} = \tau - 0.
\end{equation}
The integral on the left is negative since $b < 0$ and $\mu(w) < 0$ on $[-\varepsilon,0]$,
but $\tau > 0$, giving a contradiction. 

\emph{Case 2)} \cite[Thm 2.2.2]{agarwal1993uniqueness} First we show that any solution $w(t)$ is monotone. Suppose for contradiction, that $w(t)$ is not monotone. Then there exists two times $t_1 \neq t_2 \in [0, \tau]$ such that $w^* = w(t_1) = w(t_2)$ and $\dot{w}(w^*) > 0$ and $\dot{w}(w^*) < 0$. But this is a contradiction, since $\dot{w}(t) = -\mu(w(t))$ is a function of $w(t)$. 

Since any solution $w(t)$ is monotone, and we assume that $\mu(w) \leq 0$ for all $w \in [-\varepsilon, 0]$, $w(t)$ is a non-decreasing function. Therefore, there does not exist a time $\tau$ in which $w(t) < w(0) = 0$, and any solution $w(t) \geq 0$ for all $t \geq 0$, including specifically $\Tilde{w}(t)$.

\emph{Case 3)} \cite[Thm 2.2.2]{agarwal1993uniqueness} Suppose for contradiction, there exists $\tau > 0$, where $w(\tau) = b<0$. By assumption, $\mu(w(t))$ switches sign on $[b,0]$, and thus $w(t)$ is not monotone on $[0, \tau]$. However, solutions must be monotone, giving a contradiction.

\emph{Case 4)} \cite[Thm 2.2.2]{agarwal1993uniqueness} Suppose for contradiction, there exists a $\tau > 0$ where $w(\tau) = b \in  [-\varepsilon,0)$ for some $\varepsilon < k$. Let $\bar{\tau} < \tau$ be the greatest point such that $w(\bar{\tau}) = 0$. Take a monotonic sequence $\bar{\tau} < t^k < \tau$ converging down to $\bar{\tau}$. Integration of the ODE gives 
\begin{equation}
    \lim_{t^k \to \bar{\tau}^+}{\int^{b}_{w(t^k)}{-\frac{dw}{\mu(w)}}} = \lim_{t^k \to \bar{\tau}^+}{\tau - t}.
\end{equation}

Since $w(t)$ is monotone, the set of $w$ in $[b, 0]$ in which $\dot{w} = -\mu(w) = 0$ is a measure zero set.
Thus, it must be that $\mu(w) > 0 \ a.e.$ on $w \in [b, 0]$ and there exists a set $G = [b, 0] \setminus Z$ such that $\mu(w) > 0$ for all $w \in G$, where $Z$ is a measure zero set. Let $G_k = [b, w(t^k)] \cap G$. Then the integral $\int_{G_k}{-\frac{dw}{\mu(w)}}$ converges to the improper integral $\int^{b}_{0}{-\frac{dw}{\mu(w)}}$ via the monotone convergence theorem. By assumption, $-1/\mu(w)$ is not integrable on $[b, 0]$, since $-b < k$ and $\int^{b}_{0}{-\frac{dw}{\mu(w)}}$ diverges. However, $\lim_{t^k \to \bar{\tau}^+}{\tau - t}$ is bounded above by $\tau$ and therefore converges, giving a contradiction.

Now we consider the necessary direction. Assume all conditions do not hold. Then either $\mu(0) > 0$ or $\mu(0) = 0$, $-1/\mu(w)$ is integrable on $[-\varepsilon,0]$, and $\mu(w) > 0$ on $[-\varepsilon,0] \setminus Z$ for some $\varepsilon > 0$ and some measure zero set $Z$. If $\mu(0) > 0$, there exists a $[-\varepsilon,0]$ where $\mu(w) > 0$. Note also the minimal solution $\Tilde{w}(t)$ is unique by \cite[Thm 1.2.7]{agarwal1993uniqueness}. Choose a point $b \in U^-_{\varepsilon}$ and integrate to get $\int^{b}_{0}{-\frac{dw}{\mu(w)}} = \tau$. Because $\mu(w) > 0$ on $[b, 0]$, we can set $\tau$ to the value of the integral. Therefore this equation defines the solution where $w(\tau) = b < 0$. 

Now consider the second condition, which follows from \cite[Thm 1.4.3]{agarwal1993uniqueness}. Let $G^t = [w(t), 0] \setminus Z$. The integral of the ODE is $\int_{G^t}{-\frac{dw}{\mu(w)}} = t$ for $-\varepsilon \leq  w(t) < 0$. Since $-1/\mu(w)$ is integrable on $[-\varepsilon,0]$ by assumption, the integral converges and defines a family of solutions $w_c(t)$ satisfying
\begin{align}
    \begin{cases}
    w_c(t) = 0 \quad &t \leq c \\
    \int_{G^t}{-\frac{dw}{\mu(w)}} = t - c \quad &t > c
    \end{cases}
\end{align}
for $c \in \R^+ \cup \{\infty\}$. Since $\mu(w) > 0$ on $G^t$, taking $c = 0$ gives the minimal solution in which $w_c(t) < 0$ for $t > 0$.
\end{proof}
\end{extendedonly}

Case \ref{item:1} and Case \ref{item:2} are similar in vein to the standard Lyapunov condition, as $L_f h \geq - \mu(h) \geq 0$ on $h \in [-\varepsilon, 0]$ for some $\varepsilon > 0$. Case \ref{item:3} considers the  case when $\mu$ changes sign infinitely often. Case \ref{item:4} relaxes the usual locally Lipschitz condition to a one-sided nonintegrability condition to handle a more general class of comparison functions.

\change{Uniqueness functions have also appeared in the literature as a means for establishing invariance \cite{maghenem2019characterizations, maghenem2019sufficient, redheffer1975flow}. Essentially, $g$ is a uniqueness function if any continuously differentiable $\eta(t)$ satisfying $\eta(0)=0$ and $\dot{\eta}(t)= g(\eta(t))$ for all $t$ must necessarily be the unique solution $\eta(t) \equiv 0$ \cite{ladde1974flow}. It can be seen that all continuous uniqueness functions are minimal functions, but $\mu(w) = -w^{\frac{2}{3}}$ is an example of a minimal function that is not a uniqueness function. 
In this way, the definition of minimal functions captures the essential philosophy of barrier functions: invariance is certified by the nonexistence of solutions to the comparison system in \eqref{eq:IVP} that become strictly negative, and nonunique nonnegative solutions are not relevant to establishing invariance.}

\change{But}, if a minimal function is \change{a priori known} to induce unique solutions, then only the condition that $\mu(0) \leq 0$ needs to be checked. More specifically, it can be verified that all locally Lipschitz minimal functions with $\mu(0) \leq 0$ do indeed satisfy the hypotheses of Theorem \ref{thm:minimal}.

\begin{corollary}
\label{prop:lipmin}
Any locally Lipschitz continuous function $\mu_L: \mathbb{R}\to\mathbb{R}$ with $\mu_L(0) \leq 0$ satisfies the hypotheses of Theorem \ref{thm:minimal} and therefore is a minimal function.
\end{corollary}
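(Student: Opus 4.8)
The plan is to show that a locally Lipschitz $\mu_L$ with $\mu_L(0)\le 0$ is forced into one of the four alternatives of Theorem~\ref{thm:minimal}, via a short case split driven by the sign behavior of $\mu_L$ just to the left of the origin. First, if $\mu_L(0)<0$ we land in Case~\ref{item:1} and are done, so it remains to treat $\mu_L(0)=0$. I would fix a Lipschitz constant $L>0$ and a radius $\delta>0$ with $|\mu_L(w)|\le L|w|$ for $|w|\le\delta$, which is available since $\mu_L(0)=0$.

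Next I would dispatch Cases~\ref{item:2} and~\ref{item:3} directly from their definitions. If some $\varepsilon>0$ has $\mu_L(w)\le 0$ for all $w\in[-\varepsilon,0)$, Case~\ref{item:2} holds. If instead every $\varepsilon>0$ admits $w',w''\in[-\varepsilon,0]$ with $\mu_L(w')>0$ and $\mu_L(w'')<0$, Case~\ref{item:3} holds. So assume neither: the failure of Case~\ref{item:2} means $\mu_L$ takes strictly positive values on every left-neighborhood $[-\varepsilon,0)$; the failure of Case~\ref{item:3} gives some $\varepsilon_1>0$ on which either $\mu_L\le 0$ throughout $[-\varepsilon_1,0]$ or $\mu_L\ge 0$ throughout $[-\varepsilon_1,0]$. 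The first option is incompatible with the just-noted positivity, so we must have $\mu_L(w)\ge 0$ for all $w\in[-\varepsilon_1,0]$. Set $\varepsilon_0:=\min\{\varepsilon_1,\delta\}$.

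It then remains to verify Case~\ref{item:4} with $k:=\varepsilon_0$. Fix $\varepsilon\in(0,\varepsilon_0)$; then $\mu_L\ge 0$ on $[-\varepsilon,0]$. The substitution $w=-s$ turns $\int_{0}^{-\varepsilon}-\frac{dw}{\mu_L(w)}$ into $\int_{0}^{\varepsilon}\frac{ds}{\mu_L(-s)}$, and the Lipschitz estimate gives $0\le\mu_L(-s)\le Ls$ on $(0,\varepsilon]$, hence $\frac{1}{\mu_L(-s)}\ge\frac{1}{Ls}$ wherever $\mu_L(-s)>0$ (and the integrand is $+\infty$ on the — possibly empty — set where $\mu_L(-s)=0$, which only reinforces divergence). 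Comparison with $\frac1L\int_0^\varepsilon\frac{ds}{s}=+\infty$ shows $\int_{0}^{-\varepsilon}-\frac{dw}{\mu_L(w)}$ diverges, so $-1/\mu_L$ is non-integrable on $[-\varepsilon,0]$ for every $\varepsilon\in(0,\varepsilon_0)$, establishing Case~\ref{item:4}. In all cases, the hypotheses of Theorem~\ref{thm:minimal} hold, so $\mu_L$ is a minimal function.

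This is essentially a bookkeeping argument, and no step is a serious obstacle; the only point demanding a little care is the ``otherwise'' branch — correctly combining the negations of Cases~\ref{item:2} and~\ref{item:3} to isolate the one-sided-nonnegativity situation — and, marginally, pinning down the meaning of the integral should $\mu_L$ happen to vanish on a positive-measure subset of $[-\varepsilon,0]$, where the classical $\tfrac{1}{L|w|}$ lower bound already forces $+\infty$.
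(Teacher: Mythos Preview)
Your argument is correct and follows essentially the same route as the paper: reduce to the situation where $\mu_L(0)=0$ and $\mu_L\ge 0$ on a left-neighborhood of the origin, then use the Lipschitz bound $|\mu_L(w)|\le L|w|$ to compare $1/\mu_L$ with $1/(L|w|)$ and force divergence, landing in Case~\ref{item:4}. Your case split is laid out a bit more explicitly than the paper's (which phrases the reduction via a positive-measure criterion rather than directly negating Cases~\ref{item:2} and~\ref{item:3}), but the substance is identical.
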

\begin{extendedonly}
\begin{proof}
Notice that if, for any $\varepsilon > 0$, there exists a positive measure set $P \subset [-\varepsilon, 0]$ where $\mu_L(w) \leq 0$ for all $w \in P$, then $\mu_L$ has to satisfy one of the cases \ref{item:1}--\ref{item:4} of Theorem \ref{thm:minimal} and is necessarily a minimal function. Thus, assume $\mu_L$ is not a minimal function so that $\mu_L(0) = 0$ and there exists a constant $a > 0$ such that $\mu_L(w) > 0 \ a.e. $ for $w \in [-a, 0]$.
Since $\mu_L$ is locally Lipschitz, there exists a neighborhood $U$ around 0 such that $\mu_L$ is Lipschitz on $U$. Choose $0 < k \leq a$ such that
\begin{equation}
    [-k , 0] \subset [-a, 0] \  \cap \ U.
\end{equation}
Since $\mu_L$ is Lipschitz on $[-k, 0]$,
\begin{equation}
    \| \mu_L(w)\| \leq L \| w \|
\end{equation}
for all $w \leq k$ for some Lipschitz constant $L$. Because $\mu_L > 0 \  a.e.$ on $[-k, 0]$, it follows that $1/\mu_L(w) > 1/(Lw) \ a.e.$ on $[-k, 0]$. Then $1/\mu_L(w)$ is not integrable since $1/(Lw)$ is not integrable on any $[-\varepsilon, 0]$ for $\varepsilon \leq k$ and $\mu_L$ then satisfies Case 4 of Theorem \ref{thm:minimal}. Therefore one of the cases of Theorem \ref{thm:minimal} must hold.
\end{proof}
\end{extendedonly}

\subsection{Examples}
\label{subsec:examp}

The following examples and anti-examples demonstrate the utility of the proposed formulation of minimal barrier functions. We begin with an anti-example that highlights the importance of considering minimal solutions to differential inequalities when constructing comparison systems.

\begin{example}
  \label{ex:counterexample}
  Consider $\dot{x}=f(x)= -1$ for $x\in \mathbb{R}$ and let $h(x)= x^3$. Take
  $\mu(w)= 3(w^{1/3})^2$. Then $L_f h(x)= -\mu(h(x))$ for all $x\in\mathbb{R}$. 
  
  Although the function $\mu$ satisfies $\mu(0) \leq 0$, it is not locally Lipschitz and Corollary \ref{prop:lipmin} does not apply. Moreover, $\mu$ does not satisfy any of the conditions of Theorem \ref{thm:minimal}. Indeed, $\S = \R_{\geq 0}$ is not positively invariant on $\R$.
 
  Further, even though the comparison system $\dot{w}=-\mu(w)$ with the initial condition $w(0)=0$ has a solution $w(t) \equiv 0$, it also has the minimal solution $\Tilde{w}(t)=-t^3$. Considering $x(t)$, the solution to $\dot{x}=f(x)$ with $x(0)=0$, we see that $h(x(t))=\Tilde{w}(t)$, \emph{i.e.}, the barrier function $h$ evaluated along solutions of the system $\dot{x}=f(x)$ just match the minimal solution of the comparison system. 
\end{example}

The following example examines the case where the set $\S$ has corners, but still can be verified using a minimal barrier function.

\begin{example}
\label{ex:nonsmooth}
  Consider the system
  \begin{align}
    \label{eq:sysnonsmooth}
    \dot{x}_1&=-ax_1+bx_2\\
    \dot{x}_2&=cx_1-dx_2
  \end{align}
  where $a,b,c,d\geq 0$. Let a barrier function be $h(x)=x_1x_2$ so that $\S$ is the union of the first and third quadrants of the plane. $L_f h(x) = -ax_1x_2+bx_2^2 + cx_1^2-dx_1x_2 \geq -ax_1x_2-dx_1x_2=(-a-d)h(x)$ so that \eqref{eq:thm} is satisfied with $\mu(w)=(a+d)w$ and $\S$ is positively invariant.
\end{example}

In the next example, it is necessary to consider a non-Lipschitz minimal function to establish forward invariance with a given barrier function. Even though the vector field of the system is Lipschitz, and the barrier function $h$ is smooth, the resulting dynamics for $L_f h$, as a function of $h$, may not be Lipschitz. 

\begin{example}
  \label{ex:bump}
  Consider $\dot{x}=-|x|$ for $x \in \R$ and let 
  \begin{align}
    \label{eq:bumph}
    h(x)=
    \begin{cases}
      \exp(-1/x)&\text{if $x\geq 0$}\\
      -\exp(1/x)&\text{if $x<0$}
    \end{cases}
  \end{align}
  so that $\S = \{x:h(x)\geq 0\}= \R_{\geq 0}$ is indeed invariant. 
\begin{extendedonly}
  Now we calculate
    \begin{align}
    \label{eq:bumphdot}
       L_f h(x) =
    \begin{cases}
      -\exp(-1/x)/x &\text{if $x\geq 0$}\\
      \exp(1/x)/x &\text{if $x<0$}.
    \end{cases}
  \end{align}
The function $h$ is invertible with inverse
  \begin{align}
    \label{eq:43}
    h^{-1}(w)=
    \begin{cases}
      {-1}/{\ln(w)}&\text{if $0\leq w<1$}\\
      {1}/{\ln(-w)}&\text{if $-1<w<0$}.
    \end{cases}
  \end{align}
\end{extendedonly}
  Define a minimal function candidate
  \begin{align}
    \mu(w) = 
    \begin{cases}
      - w \ln(w) &\text{if $0 \leq w <1$}\\
      w \ln(-w) &\text{if $-1< w <0$}
    \end{cases}
\end{align}
and observe that $L_f h(x)=-\mu(h(x))$. We check that $\mu$ is a minimal function. Indeed, $\mu(h)$ is continuous with $\mu(0) = 0$ and $\mu(h) > 0 $ over $\D \setminus \{0\}$. Corresponding to Case \ref{item:4} in Theorem \ref{thm:minimal}, we check that the improper integral
\begin{align}
    \int^{-a}_{0}{-1/\mu(w) dw}
    &= - \ln(\| \ln(-w)\|) |^{-a}_{0}   
\end{align}
diverges to $\infty$ for any $a \in (0, 1)$. Therefore $\mu$ is a valid minimal function. Observe that $\mu$  is not locally Lipschitz at 0. Indeed, it \change{can be established} that there exists no locally Lipschitz minimal function satisfying \eqref{eq:thm} since any such function must be lower bounded by $\mu$ constructed above and be non-positive at the origin. \begin{shortonly}
\change{Explicit calculations are shown in the extended version.}
\end{shortonly}
\end{example}

\subsection{Comparing to Zeroing Barrier Functions}
\label{subsec:zero}

In this section, we compare MBFs to \emph{zeroing barrier functions} (ZBFs) in \cite{ames2019control}, which use extended class $\mathcal{K}$ functions for the class of comparison functions, and is a major inspiration for the work in this paper. \change{A function $\alpha: \R \to \R$ is extended class $\mathcal{K}$ if it is strictly increasing with $\alpha(0) = 0$.} We remark that if $\alpha$ is an extended class $\mathcal{K}$ function, then $-\alpha$ is a minimal function, as guaranteed by Case \ref{item:2} in Theorem \ref{thm:minimal}.

In \cite{ames2019control}, the development of ZBFs requires utilizing Nagumo's Theorem and therefore requires the assumption that $\frac{\partial h}{\partial x}$ does not degenerate to a zero vector on the boundary of the set. By directly invoking a differential inequality, as in Proposition \ref{prop:compare}, we can dispense with this assumption, which is discussed further in Section \ref{sec:Nagumo}.

\begin{shortonly}
However, restricting to extended class $\mathcal{K}$ functions imposes the barrier condition $L_f h(x) \geq -\alpha(h(x)) \geq 0$ \change{for all $x \in \D$, even when}  $h(x) \leq 0$. This requires stability of each level set \change{$\{ x \in \D : h(x) = w \}$ for any $w < 0$}. While this robustness is desirable in some instances, it does not hold for a large class of systems and sets $\S$. For example, considering the simple system $\dot{x} = x$, it can be shown that there does not exist a class $\mathcal{K}$ function $\alpha$ and a ZBF $h(x)$ such that $L_f h(x) \geq -\alpha(h(x))$ holds on all of $\D=\R$ to verify that $\S = \R_{\geq 0}$ is invariant \change{(see Example 5 in the extended version for more details)}.
\end{shortonly}

Moreover, it is not possible to simply restrict the ZBF to be defined only on \change{$ \{ x \in \D : h(x) \geq 0 \} = \S$}. Indeed, this contradicts the hypotheses in Theorem \ref{thmbarrier} that requires \eqref{eq:thm} to hold for all $x\in \D$, and ignoring this requirement can result in the following scenario.

\begin{example}
  \label{ex:counterexample2}
  Consider again Example \ref{ex:counterexample}, and take 
  \begin{align}
    \label{eq:9}
    \alpha(w)=
    \begin{cases}
      3w^{2/3}&\text{ if $w\geq 0$}\\
      -3w^{2/3}&\text{ if $w< 0$}
    \end{cases}
  \end{align}
  so that $L_f h(x)=-\alpha (h(x))$ for all $x\in \S$, although notably the equality does not hold for $x\in \mathbb{R}\backslash \S$. and thus Theorem \ref{thmbarrier} is not applicable since it requires \eqref{eq:thm} to hold for all $x\in \D$. 
  Notice that $\alpha$ is an extended class $\mathcal{K}$ function on $\mathbb{R}$ and that $-\alpha$ is a minimal function.
  While it is tempting to use $\dot{w}=-\alpha(w)$ as a comparison system with $w(0)=h(x(0))$, we obtain the false conclusion that $\S$ is positively invariant.
\end{example}

\begin{extendedonly}
A considerable benefit of using the more general class of minimal functions over extended class $\mathcal{K}$ functions for a comparison system is that using an extended class $\mathcal{K}$ function necessitates that $L_f h > 0$ on $\D \setminus \S$. While  this type of robustness is sometimes desirable, it does not hold in general.

\begin{example}
\label{ex:norobust}
Consider $\dot{x} = x$ for $x\in\mathbb{R}$ and $h(x) = x$, with the corresponding $\S = \{x: x \geq 0 \}$. Therefore, $L_f h(x) = x = h(x)$. In particular, $L_f h(x)<0$ whenever $h(x)<0$, and thus there does not exist an extended class $\mathcal{K}$ function $\alpha$ satisfying $L_f h(x)\geq -\alpha(h(x))$ for all $x\in \mathbb{R}$. However, $\mu(w)=-w$ is a minimal function satisfying $L_f h(x) \geq -\mu(h(x))$ for all $x\in\mathbb{R}$, thus proving invariance of $\S$.
\end{example}
\end{extendedonly}

\section{Discussion on Regularity}
\label{sec:Nagumo}

Arguably, the most common approach for establishing positive invariance of a set $\S$ is to verify, in some appropriate sense, that the velocity field of the system points inwards to $\S$ at each point on the boundary of $\S$. First formalized by Nagumo in \cite{nagumo1942lage} and independently discovered by others, there has since been a volume of work dedicated to making this basic approach precise in various contexts, e.g. \cite{redheffer1975flow}, \cite{blanchini2008set}, \cite{aubin2009viability}. We consider the important specialization of Nagumo's Theorem to the case where $\S=\{x:h(x)\geq 0\}$ for a smooth function $h : \D \to \mathbb{R}$.

For a continuously differentiable function $h:\D\to \mathbb{R}$ for an open set $\D\subseteq \mathbb{R}^n$, $\lambda\in \mathbb{R}$ is a \emph{regular value} of $h$ if $\frac{\partial h}{\partial x} (x)\neq 0$ for all $x\in\{x \in \D :h(x)=\lambda\}$.

We now recall a version of Nagumo's Theorem, vital to the construction of barrier functions in
\cite{prajna2004safety}, \cite{ames2019control}, etc.

\begin{prop}[{\cite[Sec 4.2.1]{blanchini2008set}}]
\label{prop:nagsmooth}
Consider the system \eqref{eq:sys} under the added condition that solutions are unique, and a nonempty set $\S = \{x \in \D : h(x) \geq 0 \}$ for some continuously differentiable $h : \D \to \R$. Further assume that $0$ is a regular value of $h$. Then $\S$ is positively invariant if and only if
  \begin{align}
    \label{eq:nagumo}
    L_f h(x)\geq 0
  \end{align}
  for all $x\in \{x\in\D:h(x)=0\}$.
\end{prop}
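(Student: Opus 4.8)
The plan is to treat the two implications separately: necessity is immediate and uses neither regularity nor uniqueness, while sufficiency is the classical Nagumo argument in which the regular-value hypothesis is precisely what closes the degenerate case.

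For necessity, assume $\S$ is positively invariant and fix any $x_0$ with $h(x_0)=0$. Let $x(t)$ be the solution with $x(0)=x_0$ (it exists since $f$ is continuous). By positive invariance $h(x(t))\geq 0$ for all $t\in I[x(\cdot)]$, whereas $h(x(0))=0$, so the scalar function $t\mapsto h(x(t))$, which has right derivative $\frac{\partial h}{\partial x}(x_0)f(x_0)=L_f h(x_0)$ at $t=0$ by the chain rule, attains a minimum over a right-neighborhood of $0$ at $t=0$. Hence $L_f h(x_0)=\lim_{t\to 0^+}\frac{h(x(t))-h(x_0)}{t}\geq 0$.

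For sufficiency, assume $L_f h(x)\geq 0$ for all $x\in\partial\S=\{x\in\D:h(x)=0\}$ and suppose, for contradiction, that some solution $x(t)$ with $x(0)=x_0\in\S$ has $h(x(t_1))<0$ for some $t_1>0$. Put $t_0=\sup\{t\in[0,t_1]:h(x(t))\geq 0\}$; continuity of $h\circ x$ gives $h(x(t_0))=0$ and $h(x(t))<0$ on $(t_0,t_1]$, so $y:=x(t_0)\in\partial\S$ satisfies $L_f h(y)\geq 0$. The obstruction is that $L_f h(y)=0$ is compatible with the trajectory leaving $\S$ immediately after $t_0$, so a first-derivative test at $t_0$ does not conclude --- and overcoming this is exactly where the regular value of $h$ is used. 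I would close it with the classical inward-perturbation device: since $0$ is a regular value, $\frac{\partial h}{\partial x}$ is nonvanishing on a neighborhood of $\{h=0\}$, so for small $\varepsilon>0$ one can build a continuous vector field $f_\varepsilon$ that equals $f$ away from that neighborhood but satisfies $L_{f_\varepsilon} h\geq\varepsilon>0$ near $\partial\S$ (add $\varepsilon$ times a smooth cutoff multiple of $(\partial h/\partial x)^\top/\|\partial h/\partial x\|^2$). For $\dot x=f_\varepsilon(x)$, a trajectory issued from $\S$ cannot cross $\partial\S$ outward, since $h$ is strictly increasing along it at any such crossing; hence such trajectories stay in $\S$ on $[0,t_1]$. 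But by uniqueness of solutions for \eqref{eq:sys} together with continuity of $f$, solutions of $\dot x=f_\varepsilon(x)$ from $x_0$ converge uniformly on $[0,t_1]$ to $x(t)$ as $\varepsilon\to 0$, so for $\varepsilon$ small enough one of them reaches $\{h<0\}$ near time $t_1$ --- a contradiction.

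An alternative, arguably cleaner, route is to deduce sufficiency from the general form of Nagumo's theorem for a relatively closed set --- and $\S$ is relatively closed in $\D$ because $h$ is continuous --- under continuity of $f$ and uniqueness of solutions: positive invariance is equivalent to the subtangentiality condition $f(x)\in T_\S(x)$ for all $x\in\partial\S$, where $T_\S(x)$ is the Bouligand contingent cone. It then remains to show, via the implicit function theorem (near a regular boundary point $\S$ is locally a diffeomorphic image of a closed half-space), that $T_\S(x)=\{v:\frac{\partial h}{\partial x}(x)v\geq 0\}$, so that $f(x)\in T_\S(x)$ reads exactly as $L_f h(x)\geq 0$. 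Either way, the main obstacle --- and the only place the regular value of $h$ is genuinely needed --- is establishing that the boundary inequality $L_f h\geq 0$ is sufficient in the equality case $L_f h(y)=0$; equivalently, that the contingent cone is no smaller than the closed half-space, equivalently that the inward perturbation above exists.
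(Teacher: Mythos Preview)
The paper does not supply its own proof of this proposition: it is stated with the citation \cite[Sec 4.2.1]{blanchini2008set} and invoked as a known result, with no argument given in either the short or extended version. So there is nothing to compare your proposal against on the paper's side.

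That said, your proposal is a sound reconstruction of the classical argument. The necessity direction is exactly the standard one-line computation. For sufficiency, both routes you sketch are correct in spirit. The inward-perturbation argument is the one closest to how Blanchini presents it: you correctly identify that the only delicate case is $L_f h(y)=0$ at the exit point, and that the regular-value hypothesis is what lets you push the vector field strictly inward near $\partial\S$ and then pass to the limit using continuous dependence (which in turn relies on uniqueness for the limiting system, as you note). One small point worth tightening: the perturbed field $f_\varepsilon$ need not itself have unique solutions, so you should phrase the invariance of $\S$ under $f_\varepsilon$ as holding for \emph{every} solution (which your strict-inequality argument does deliver), and then invoke a Kamke-type convergence result rather than literal continuous dependence. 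Your second route via the Bouligand contingent cone and the identification $T_\S(x)=\{v:\frac{\partial h}{\partial x}(x)v\geq 0\}$ at regular boundary points is equally valid and is essentially how the paper's Appendix~C frames the general Nagumo condition in Theorem~\ref{prop:Bouligand}.
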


Proposition \ref{prop:nagsmooth} provides a powerful result for establishing invariance of $\S$ provided that $0$ is a regular value of $h$.
In this case, we can equivalently state the condition in \eqref{eq:nagumo} as
\begin{equation}
    \label{eq:nagbar}
    L_f h(x) \geq -\phi(h(x)) \quad \forall x \in \D
\end{equation}
where $\phi :\R \to \R$ is any function with $\phi(0) \leq 0$. Notice that $0$ being a regular value allows us to discount much of the structure of $\mu$ defined for Theorem \ref{thm:minimal}. 
\begin{shortonly}
However, as shown in the next theorem, under further mild conditions, the existence of a locally Lipschitz comparison function is guaranteed.
\end{shortonly}
\begin{extendedonly}
However, as shown in Theorem \ref{thm:existloc}, under further mild conditions, the existence of a locally Lipschitz comparison function is guaranteed.
\end{extendedonly}

\begin{extendedonly}
We first provide a construction for a minimal  function given a candidate barrier function, provided one exists.

\begin{lemma}
\label{lem:Gamma}
Given a system of the form \eqref{eq:sys} and a candidate barrier function $h(x)$, let $\Gamma:W \to \R$ be defined as
\begin{equation}
\label{eq:Gamma}
    \Gamma(w) = \inf_{x: h(x) = w}{L_f h(x)}
\end{equation}
where $W = \{h(x):x\in\D\}$ is the range of $h$. If there exists some minimal function that satisfies condition \eqref{eq:thm}, and if $\Gamma$ is continuous, then $-\Gamma$ is also a minimal function that satisfies condition \eqref{eq:thm}.
\end{lemma}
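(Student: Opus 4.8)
The plan is to check two things in turn: that $-\Gamma$ satisfies the barrier condition \eqref{eq:thm}, and that $-\Gamma$ is a minimal function in the sense of Definition \ref{def:minimal}. The first is essentially free. Let $\mu$ be a minimal function with $L_f h(x) \geq -\mu(h(x))$ for all $x \in \D$. First I would record that $\Gamma$ is genuinely real-valued on $W$: for each $w \in W$ there is some $x$ with $h(x) = w$, giving $\Gamma(w) \leq L_f h(x) < \infty$, while the hypothesis on $\mu$ forces $\Gamma(w) \geq -\mu(w) > -\infty$, since $L_f h(x') \geq -\mu(h(x')) = -\mu(w)$ for every $x'$ with $h(x') = w$. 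Taking the infimum over $\{x' : h(x') = h(x)\}$ in the defining inequality for $\Gamma$ then gives $L_f h(x) \geq \Gamma(h(x)) = -(-\Gamma)(h(x))$ for all $x \in \D$, which is exactly \eqref{eq:thm} for $-\Gamma$. The same line also shows $-\Gamma \leq \mu$ on $W$, so $-\Gamma$ is the pointwise-smallest comparison function obeying \eqref{eq:thm}, which is the sense in which it is ``minimal.''

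Next I would treat the minimal-function property. Let $\Tilde{w}(t)$ be the minimal solution of $\dot{w} = -(-\Gamma)(w) = \Gamma(w)$, $w(0) = 0$, on its maximal interval $[0,\tau)$; it exists because $\Gamma$ is continuous (cf. \cite[Thm 1.3.2]{lakshmikantham1969differential}), and the goal is $\Tilde{w}(t) \geq 0$ throughout. Since $\Tilde{w}$ is a solution of that ODE it takes values in $W$, so by the previous step $\dot{\Tilde{w}}(t) = \Gamma(\Tilde{w}(t)) \geq -\mu(\Tilde{w}(t))$ for all $t \in [0,\tau)$, with $\Tilde{w}(0)=0$. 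Letting $\Tilde{v}(t)$ be the minimal solution of the comparison system $\dot{v} = -\mu(v)$, $v(0)=0$ on $[0,\tau_v)$ --- which is nonnegative because $\mu$ is a minimal function --- Proposition \ref{prop:compare}, applied with $g = -\mu$ and $\eta = \Tilde{w}$ on $[0,\min(\tau,\tau_v))$, yields $\Tilde{w}(t) \geq \Tilde{v}(t) \geq 0$ on that interval.

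The one genuine obstacle is the possibility $\tau_v < \tau$, i.e. that the comparison solution expires before $\Tilde{w}$ does; this is the step I expect to need the most care. The plan there is a short extension argument: if $\tau_v < \tau$, then $\Tilde{w}$ is continuous on the compact interval $[0,\tau_v]$, hence bounded there, say by $M$, so the comparison bound just obtained confines $\Tilde{v}$ to $[0,M]$ on all of $[0,\tau_v)$; but a maximal solution of the scalar ODE $\dot{v}=-\mu(v)$ cannot terminate at a finite time while remaining in a compact set (the only obstruction to continuation is divergence, cf. \cite[Corollary 1.1.2]{lakshmikantham1969differential}), contradicting maximality of $[0,\tau_v)$. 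Hence $\tau_v \geq \tau$, the comparison bound holds on all of $[0,\tau)$, and $\Tilde{w}(t) \geq 0$ there, so $-\Gamma$ is a minimal function satisfying \eqref{eq:thm}. A minor bookkeeping point throughout is that $\Gamma$ is defined only on the range $W$ while Definition \ref{def:minimal} and the comparison system regard $\mu$ as a function on all of $\R$; this causes no difficulty, since $\Tilde{w}$ stays in $W$ and only values of $\Gamma$ at points of $W$ are ever invoked.
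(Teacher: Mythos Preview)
Your argument is correct, but it proceeds quite differently from the paper's. The paper does not touch the comparison principle at all here; instead it invokes the necessary-and-sufficient characterization of minimal functions in Theorem~\ref{thm:minimal}. Since $-\Gamma \leq \mu$ on $W$, the paper argues by cases: either $\mu$ is nonpositive on a positive-measure subset of every $[-\varepsilon,0]$, in which case the same holds for $-\Gamma$ and one of Cases~\ref{item:1}--\ref{item:4} is forced; or else $\mu > 0$ a.e.\ on some $[-k,0]$, in which case $\mu$ must fall under Case~\ref{item:4}, and the pointwise bound $1/\mu \leq -1/\Gamma$ transfers the nonintegrability of $1/\mu$ to $-1/\Gamma$, so $-\Gamma$ also lands in Case~\ref{item:4}.

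Your route is arguably cleaner: you bypass the four-case taxonomy entirely and go straight back to Definition~\ref{def:minimal} via Proposition~\ref{prop:compare}, using $\Gamma \geq -\mu$ to squeeze the minimal solution of $\dot{w} = \Gamma(w)$ above the (nonnegative) minimal solution of $\dot{v} = -\mu(v)$. The finite-time escape argument handling $\tau_v < \tau$ is the right patch, and it mirrors the one used inside the proof of Theorem~\ref{thmbarrier}. What the paper's approach buys is that it never needs to reason about intervals of existence for the $\Gamma$-system; what yours buys is independence from Theorem~\ref{thm:minimal}, so the lemma would survive even without that characterization. Both proofs share the same unaddressed technicality that $\Gamma$ lives on $W$ rather than all of $\R$, which you at least flag explicitly.
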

\begin{proof}
Let $\mu$ be a minimal function satisfying \eqref{eq:thm} so that $\mu(h(x)) \geq - L_f h(x) \ \forall x \in \D$. It follows from the definition of $\Gamma$ that
\begin{equation}
    \mu(w) \geq -\Gamma(w)
\end{equation}
for all $w\in W$. Since $\mu$ is a minimal function, $\mu$ satisfies one of the cases in Theorem \ref{thm:minimal}. 

If, for every $\varepsilon > 0$, there exists a positive measure set $P \subset [-\varepsilon, 0]$ where $\mu(w) \leq 0$ for $w \in P$, then $-\Gamma(w) \leq \mu(w) \leq 0$ on $P$ and therefore $-\Gamma$ has to satisfy one of the cases \ref{item:1}--\ref{item:4} of Theorem \ref{thm:minimal} so that $-\Gamma$ is  a minimal function. We can then consider the alternative condition that $-\Gamma(w)$ and $\mu(w) > 0 \ a.e.$ on $[-k, 0]$ for some $k$.
Because $\mu(h(x)) \geq -\Gamma(h(x))$ for all $x\in\D$ and $\Gamma \neq 0 \ a.e.$, then it must be that $1/\mu(h(x)) \leq -1/\Gamma(h(x)) \ a.e$. Because $1/\mu(h(x))$ is nonintegrable and positive a.e., $-1/\Gamma(h(x))$ is nonintegrable as well. So $-\Gamma$ must necessarily satisfy one of the cases if there exists a minimal function that does.
\end{proof}

Therefore, showing that $-\Gamma(w)$ is a minimal function is equivalent to the existence of a minimal function under the assumption of continuity of $\Gamma(w)$. Thus, we will only focus our attention on $\Gamma(w)$. 

To analyze what conditions on the barrier function are necessary for the continuity properties of $\Gamma$, we introduce some tools from topology and optimization. 

First we describe a generalized inverse of $h(x)$ as a point-to-set mapping $h^{-1}: W \rightrightarrows \D$, where $h^{-1}(w)=\{x \in \D: h(x) = w \}$ and $W \subset \R$ is the range of $h$, \emph{i.e.}, $W=\{h(x):x\in \D\}$. We use $W \rightrightarrows \D$ in place of $W \to 2^{\D}$ for ease of notation. The function $\Gamma(w)$ can now be defined as $\Gamma(w) = \inf \{L_f h(x) : x \in h^{-1}(w)\}$.

We now introduce some necessary definitions for $h^{-1}$ to be continuous as a point-to-set map:

\begin{itemize}
\item $h^{-1}$ is \emph{lower semicontinuous} (l.s.c) at $w_0$ if for each open set $G$ s.t. $G \cap h^{-1}(w_0) \neq \varnothing$, there exists a neighborhood $U(w_0)$ s.t. $w \in U(w_0) \implies h^{-1}(w) \cap G \neq \varnothing$ \cite{berge1997topological}. 

\item $h^{-1}$ is \emph{upper semicontinuous} (u.s.c) at $w_0$ if for each open set $G$ s.t. $h^{-1}(w_0) \subset G$, there exists a neighborhood $U(w_0)$ s.t. $w \in U(w_0) \implies h^{-1}(w) \subset G$ \cite{berge1997topological}.

\item $h^{-1}$ is \emph{continuous} at $w_0$ if it is both upper and lower semicontinuous at $w_0$. 

\end{itemize}
If $h^{-1}$ always maps to a single point, then definitions of lower and upper semicontinuity coincide with the standard definitions of continuity for functions \cite{berge1997topological}.

The next theorem gives the necessary conditions for a locally Lipschitz minimal function to exist.

\end{extendedonly}

\begin{thm}
\label{thm:existloc}
Let $h: \D \to \R$ be a twice continuously differentiable function, assume $f$ in \eqref{eq:sys} is locally Lipschitz, and suppose $\Lambda_{\delta} := \{x \in \D : -\delta \leq h(x) \leq \delta \}$ is compact for all $\delta \geq 0$. Further assume that $0$ is a regular value of $h$. Then there exists a locally Lipschitz function $\mu_L$ such that $\mu_L(0) \leq 0$ and 
\begin{equation}
    L_f h(x) \geq -\mu_L(h(x)) \quad \forall x \in \D.
\end{equation}
\end{thm}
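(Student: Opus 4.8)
The plan is to show that the function $\Gamma(w) = \inf_{x : h(x) = w} L_f h(x)$, already introduced in Lemma~\ref{lem:Gamma}, is locally Lipschitz near $w = 0$ under the stated regularity hypotheses, and then to modify it away from the origin to obtain a globally locally Lipschitz $\mu_L$ with $\mu_L(0) \le 0$ satisfying the barrier inequality. Since $\S$ is assumed positively invariant (this should be part of the hypothesis as stated in the overview), some minimal function satisfying \eqref{eq:thm} exists, so by Lemma~\ref{lem:Gamma} it suffices to work with $-\Gamma$, and in particular $-\Gamma(0) \le 0$ because along any solution starting on $\partial\S$ the set stays in $\S$, forcing $L_f h(x) \ge 0$ for $x$ with $h(x) = 0$ via Nagumo (Proposition~\ref{prop:nagsmooth}). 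The core work is the local Lipschitz estimate on $\Gamma$.

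First I would use that $0$ is a regular value of $h$: by continuity of $\frac{\partial h}{\partial x}$ and compactness of $\Lambda_\delta$, there is a $\delta_0 > 0$ and a constant $c > 0$ with $\|\frac{\partial h}{\partial x}(x)\| \ge c$ for all $x \in \Lambda_{\delta_0}$. On this tube the level sets $\{h = w\}$ for $|w| \le \delta_0$ are compact $C^2$ hypersurfaces, and the implicit function theorem / flow of the normalized gradient field $\frac{\nabla h}{\|\nabla h\|^2}$ gives a $C^1$ diffeomorphism $\Psi(w, \cdot)$ carrying $\{h = 0\}$ onto $\{h = w\}$, with $\Psi$ jointly $C^1$ on $[-\delta_0,\delta_0] \times \{h=0\}$ (here twice-differentiability of $h$ and local Lipschitzness of $f$ are exactly what makes $L_f h = \frac{\partial h}{\partial x} f$ Lipschitz in $x$ on the compact tube). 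Then $\Gamma(w) = \min_{y \in \{h=0\}} L_f h(\Psi(w,y))$, a parametrized minimum of a Lipschitz-in-$(w,y)$ function over a fixed compact set; a standard argument (Danskin-type, or just the two-sided inequality $|\min_y F(w_1,y) - \min_y F(w_2,y)| \le \sup_y |F(w_1,y) - F(w_2,y)|$) shows $\Gamma$ is Lipschitz on $[-\delta_0,\delta_0]$. The infimum in the definition of $\Gamma$ is attained, and continuity of $\Gamma$ on all of $W$ (needed to invoke Lemma~\ref{lem:Gamma}) follows from the same compactness of $\Lambda_\delta$ together with continuity of $h^{-1}$ as a point-to-set map on the tube.

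To finish, I would patch $-\Gamma$ into a globally locally Lipschitz $\mu_L$: set $\mu_L(w) = -\Gamma(w)$ for $|w| \le \delta_0/2$; for $w$ outside a neighborhood of $0$ the minimal-function structure of Theorem~\ref{thm:minimal} is irrelevant (only behavior near $0$ matters for a minimal function), so one can replace $-\Gamma$ by any locally Lipschitz lower bound — e.g. take $\mu_L(w)$ to be a locally Lipschitz function with $\mu_L(w) \le -\Gamma(w)$ for all $w \in W$, obtained by a local-Lipschitz smoothing / partition-of-unity majorization of the continuous function $-\Gamma$ on the (locally compact) range $W$, glued to $-\Gamma$ on $[-\delta_0/2,\delta_0/2]$ via a Lipschitz cutoff. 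Then $\mu_L(0) = -\Gamma(0) \le 0$, $\mu_L$ is locally Lipschitz, and $L_f h(x) \ge \Gamma(h(x)) \ge -\mu_L(h(x))$ for all $x$; by Corollary~\ref{prop:lipmin} $\mu_L$ is automatically a minimal function, so $h$ is an MBF with locally Lipschitz comparison function, as claimed.

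The main obstacle I anticipate is the local Lipschitz estimate on $\Gamma$ near $0$: making precise that the level-set diffeomorphism $\Psi$ is genuinely $C^1$ (so that $L_f h \circ \Psi$ is Lipschitz jointly in $(w,y)$) requires carefully combining the regular-value condition, the $C^2$ hypothesis on $h$, the local Lipschitzness of $f$, and the compactness of $\Lambda_\delta$ — and one must be slightly careful that the level sets are connected/well-behaved, or else phrase everything in terms of the compact manifold $\{h=0\}$ and the gradient flow without worrying about the global topology of $\D$.
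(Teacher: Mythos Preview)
Your proposal is correct and reaches the same conclusion, but the key Lipschitz estimate on $\Gamma$ near $w=0$ is obtained by a genuinely different mechanism than in the paper. The paper invokes Lyusternik's metric-regularity theorem to get a bound $\rho(x', h^{-1}(w)) \leq L_1\,|h(x') - w|$ uniformly over a neighborhood of $h^{-1}(0)$ (using compactness of $h^{-1}(0)$ to pass from local to uniform constants, and upper semicontinuity of $h^{-1}$ from the appendix to keep nearby level sets inside the cover), and then chains this with Lipschitzness of $L_f h$ on the compact tube to bound $|\Gamma(w_1)-\Gamma(w_2)|$ directly. You instead construct an explicit $C^1$ diffeomorphism $\Psi(w,\cdot):\{h=0\}\to\{h=w\}$ via the flow of the normalized gradient $\nabla h/\|\nabla h\|^2$, reducing the problem to a parametric minimum of a jointly Lipschitz function over a \emph{fixed} compact manifold and applying the elementary two-sided bound on minima. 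Your route is more constructive and sidesteps the point-to-set continuity machinery the paper imports, at the cost of leaning more heavily on the $C^2$ hypothesis (needed so that the gradient vector field is $C^1$ and hence $\Psi$ is $C^1$); the paper uses $C^2$ only to make $L_f h$ locally Lipschitz, since Lyusternik requires merely $C^1$. Both approaches need the (implicit) positive-invariance hypothesis you correctly flagged in order to conclude $\Gamma(0)\ge 0$. One small slip in your patching step: you wrote $\mu_L(w)\le -\Gamma(w)$, but the inequality you actually need---and the one you use in the final chain $L_f h \ge \Gamma \ge -\mu_L$---is $\mu_L(w)\ge -\Gamma(w)$.
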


\begin{extendedonly}
\begin{proof}
Let $\rho(x, A) = \inf \{ \| x - y \| : y \in A \}$ denote the point-to-set distance from $x\in\D$ to some set $A\subseteq \mathbb{R}^n$. 
Because $0$ is a regular value of $h$, Lyusternik's Theorem \cite{dmitruk1980lyusternik} applies, so that for all $x\in h^{-1}(0)$, there exists a neighborhood $\mathcal{N}_1(x)\subset \mathbb{R}^n$ of $x$, a neighborhood $\mathcal{N}_2(x)\subset \mathbb{R}$ of 0, and a constant $K(x)>0$ such that
\begin{equation}
    \rho(x', h^{-1}(w)) \leq K(x) \| h(x') - w \|
\end{equation}
for all $x'\in \mathcal{N}_1(x)$ and all $w\in \mathcal{N}_2(x)$.

Next, notice that $\bigcup_{x\in h^{-1}(0)}\mathcal{N}_1(x)$ is an open cover of $h^{-1}(0)$, and $h^{-1}(0) = \Lambda_0$ is assumed to be compact. By compactness of $h^{-1}(0)$, there exists a finite subcover, \emph{i.e.}, a finite set of points $\{x_i\}_{i=1}^N\subset h^{-1}(0)$ such that
\begin{equation}
    h^{-1}(0)\subset \bigcup_{i=1}^N \mathcal{N}_1(x_i)=:\mathcal{C}.
\end{equation}

It is shown in Proposition \ref{prop:contminim} that $h^{-1}$ is u.s.c at $0$ under the assumptions of the theorem statement.  By definition of upper semicontinuity, since $\mathcal{C}$ is an open set that covers $h^{-1}(0)$, there exists a neighborhood $V$ of $0$ such that
\begin{equation}
    w \in V \implies h^{-1}(w) \subset \mathcal{C}.
\end{equation}
For some $i\in\{1,\ldots, N\}$, there also exists a neighborhood $X$ of the point $x_i$ and a neighborhood $W$ of $0$ such that the solution for $h(x) = w$ exists for any $w \in W$ and some $x \in X$, due to $0$ being a regular value \cite{aubin2006applied}. Then, for all $w \in W$, $h^{-1}(w)$ is nonempty. Let 
\begin{equation}
    L_1 = \max_{i\in\{1,\ldots,N\}}{K(x_i)}
\end{equation}
and let
\begin{equation}
    U = \bigcap_{i=1}^N{\mathcal{N}_2(x_i)} \cap V \cap W.
\end{equation}
Observe that $U$ is a neighborhood of $0$ since it is a finite intersection of neighborhoods of $0$. Therefore, 
\begin{equation}
    \rho(x', h^{-1}(w)) \leq L_1 \| h(x') - w \|
\end{equation}
for all $x'\in \mathcal{C}$ and all $w\in U$.

Because $f$ and $\frac{\partial h}{\partial x}$ are locally Lipschitz, so is $L_f h$, and thus $L_f h$ is Lipschitz on some compact set $U_c \supset U$ with some Lipschitz constant $L_2$. Now we show that $\Gamma$ defined in \eqref{eq:Gamma} is Lipschitz on $U$. Choose $w_1, w_2\in U$. Now choose $x_1\in h^{-1}(w_1)$ such that $L_f(x_1) = \Gamma(w_1)$. This is possible since $L_f h$ is continuous and $h^{-1}(w_1) \subset \Lambda_{\|w_1\|}$ is compact, so an extremal point exists. Next, choose $x_2\in h^{-1}(w_2)$ such that
\begin{equation}
    \|x_1 - x_2 \| = \rho(x_1, h^{-1}(w_2))
\end{equation}
which is also possible since $h^{-1}(w_2)$ is also compact and $\| x_1 - x_2 \|$ is continuous in $x_2$ for a  fixed $x_1$. Note that
\begin{align}
    \label{eq:ineq1}
    \Gamma(w_2) - \Gamma(w_1) &\leq L_f h(x_2) - L_f h(x_1) \\
    \label{eq:ineq2}
    &\leq \| L_f h(x_2) - L_f h(x_1) \| \\
    \label{eq:ineq3}
    &\leq L_2 \|x_1 - x_2 \| \\
    \label{eq:ineq4}
    &\leq L_2 \rho(x_1, h^{-1}(w_2)) \\
    \label{eq:ineq5}
    &\leq L_1 L_2 \| h(x_1) - w_2 \| \\
    \label{eq:ineq6}
    &\leq L_1 L_2 \| w_1 - w_2 \|.
\end{align}
The inequality \eqref{eq:ineq1} holds from properties of $\inf$ and \eqref{eq:ineq3} is due to $L_f h$ being Lipschitz with a Lipschitz constant $L_2$ on $U_c \supset U$. Note that we previously chose $x_2$ to give the inequality in \eqref{eq:ineq4}. Finally, since $w_1$ and $w_2$ are chosen from $U$, we  apply Lyusternik's theorem for the inequality in \eqref{eq:ineq5}. A similar argument establishes that $\Gamma(w_1) - \Gamma(w_2) \leq  L_1 L_2 \| w_1 - w_2 \|$ i.e, 
\begin{equation}
    \| \Gamma(w_1) - \Gamma(w_2) \| \leq  L_1 L_2 \| w_1 - w_2 \|,
\end{equation}
and thus $\Gamma$ is Lipschitz on $U$ with a Lipschitz constant $L_1 L_2$.

Because $\Lambda_\delta$ is assumed to be compact for all $\delta \geq 0$ and $L_f h$ is continuous, $\Gamma$ is bounded on $[-\delta, \delta]$ for all $\delta \geq 0$ as well. Therefore, there exists a locally Lipschitz function $\mu_L: \R \to \R$ such that $\mu_L(w) \geq -\Gamma(w)$ for all $w\in W$ and, for some neighborhood $U'\subset U$ of $0$, $\mu_L$ restricted to $U'$ is equal to $\Gamma$. Furthermore, $L_f h(x) \geq \Gamma(h(x)) \geq -\mu_L(h(x))$ for all $x\in \D$, so the barrier condition \eqref{eq:thm} is satisfied.
Since $\S$ is assumed to be invariant, $L_f h(x) \geq 0$ for all $x \in h^{-1}(0)$, so $\mu_L(0) = -\Gamma(0) \leq 0$. Therefore $\mu_L$ is locally Lipschitz and $\mu_L(0) \leq 0$.
\end{proof}
\end{extendedonly}

Ensuring smoothness properties of the comparison function is useful in generating constraint-based controllers, which is further discussed in Section \ref{sec:Control}.

Moreover, we have the following immediate corollary.
\begin{corollary}
\label{sufnecbar}
Given the assumptions in Theorem \ref{thm:existloc}, $\S = \{x \in \D :h(x)\geq 0\}$ is positively invariant if and only if $h$ is a minimal barrier function.
\end{corollary}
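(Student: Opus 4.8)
The plan is to prove the two implications separately, each reducing directly to a result already in hand. For the backward implication, I would start by assuming $h$ is a minimal barrier function; then Theorem~\ref{thmbarrier} yields at once that $\S=\{x\in\D:h(x)\geq 0\}$ is positively invariant. This half uses none of the extra regularity hypotheses carried over from Theorem~\ref{thm:existloc}: it is valid whenever $h$ is continuously differentiable and $f$ is continuous.

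For the forward implication, I would assume $\S$ is positively invariant and invoke Theorem~\ref{thm:existloc} under its standing hypotheses (twice differentiability of $h$, local Lipschitzness of $f$, compactness of the slabs $\Lambda_\delta$, and $0$ a regular value of $h$), which delivers a locally Lipschitz function $\mu_L:\R\to\R$ with $\mu_L(0)\leq 0$ such that $L_f h(x)\geq -\mu_L(h(x))$ for all $x\in\D$. The only remaining point is to certify that $\mu_L$ is genuinely a minimal function, and this is precisely Corollary~\ref{prop:lipmin}. Combining, $h$ meets Definition~\ref{def:MBF}, so it is a minimal barrier function.

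I do not anticipate any real obstacle: the statement is simply the concatenation of Theorem~\ref{thmbarrier} (sufficiency of an MBF for invariance), Theorem~\ref{thm:existloc} (existence of a locally Lipschitz comparison function when $\S$ is invariant and regular), and Corollary~\ref{prop:lipmin} (locally Lipschitz functions that are nonpositive at the origin are minimal). The one subtlety worth noting is that the forward direction silently inherits, through Theorem~\ref{thm:existloc}, a dependence on the Nagumo-type characterization in Proposition~\ref{prop:nagsmooth}, which requires solution uniqueness; but this is automatically satisfied here since $f$ is locally Lipschitz, so no additional computation is required.
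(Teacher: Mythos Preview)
Your proposal is correct and matches the paper's own proof essentially line for line: sufficiency via Theorem~\ref{thmbarrier}, necessity via Theorem~\ref{thm:existloc} to produce a locally Lipschitz $\mu_L$ with $\mu_L(0)\le 0$, and then Corollary~\ref{prop:lipmin} to certify that $\mu_L$ is a minimal function. Your added remarks about which hypotheses are actually used in each direction and the implicit reliance on the Nagumo-type necessary condition are accurate and do not alter the argument.
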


\begin{extendedonly}
\begin{proof}
Sufficiency comes from Theorem \ref{thmbarrier}, so we only show the necessary direction that $\S$ being invariant implies that $h$ is a minimal barrier function. If the assumptions of Theorem \ref{thm:existloc} are satisfied and if $\S$ is invariant, there exists a $\mu_L$ such that $\mu_L$ is locally Lipschitz with $\mu_L(0) \leq 0$ and $L_f h(x) \geq -\mu_L(h(x))$ for all $x \in \D$. By Corollary \ref{prop:lipmin}, $\mu_L$ is a minimal function and hence, $h$ is a minimal barrier function.
\end{proof}
\end{extendedonly}

\change{Notice that no regularity assumption is made on $h$ in Theorem \ref{thmbarrier}. Removal of this assumption is due to the structural conditions on $\mu$. On the other hand, Corollary \ref{sufnecbar} shows that essentially any candidate barrier function $h$ with $0$ a regular value and $\S$ being compact has a locally Lipschitz comparison function. Therefore we can use Theorem \ref{thmbarrier} without loss of generality from Proposition \ref{prop:nagsmooth} for compact sets,}
and this allows for considering a significant class of sets $\S$ that are not regular. For example, proving invariance of points, cycles, or any other lower dimensional manifold is possible with the theory of minimal barrier functions. In such cases, $0$ cannot be a regular value of the barrier function $h$ because the set $\S=\{x:h(x)\geq 0\}$ has measure zero.
In addition, Examples \ref{ex:nonsmooth} and \ref{ex:bump} provide other cases that can be considered with MBFs for which 0 is not a regular value.

\section{Minimal Control Barrier Functions}
\label{sec:Control}
A major benefit for using differential inequalities defined over the whole domain rather than just a boundary-type condition is that it is more amenable to controlled invariance. In constraint-based control, it is desirable to have constraints on the controller that are applied at every point on the domain rather than just a condition on the boundary of $\S$. If a boundary-type condition is directly applied for controlled invariance, the constraints are only active on a measure zero set, which may introduce discontinuities in the controller and render it sensitive to model and sensor noise.

Extensions of minimal barrier functions to control formulations is direct. In this section, we instead consider a control affine system of the form
\begin{equation}
    \label{eq:contaf}
    \dot{x} = f(x) + g(x)k(x)
\end{equation}
with state $x \in \D$, where $\D \subseteq \R^n$ is assumed to be an open set, a feedback controller $k: \D \to \R^m$, and $f : \D \to \R^n$ and $g: \D \to \R^{n \times m}$ are both assumed to be continuous. We also assume that $k(x) \in U(x)$ for all $x\in\D$, where $U: \D \rightrightarrows \R^m$ is a point-to-set map defining state-based input constraints. Point-to-set maps are denoted with $\rightrightarrows$ for ease of notation. Further define $\mathcal{U}$ as the viable set of continuous controllers
\begin{equation}
\mathcal{U} = \{k \text{ continuous} : k(x) \in U(x) \ \forall x \in \D \}.    
\end{equation}
\begin{shortonly}
In the rest of the section, we assume $U(x)$ defines a set of state-based \change{affine} input constraints of the form
\begin{equation}
    \label{eq:inputinequal}
    U(x) = \{u \in \R^m : A(x)u \preceq b(x) \}
\end{equation}
where $A: \R^n \to \R^{k\times m} $ and $b:\R^n \to \R^k$ are both assumed to be continuous in $x$ and $A(x)_i u \leq b(x)_i$ holds elementwise. \change{General convex input constraints with more detailed proofs are treated in the extended version.}
\end{shortonly}

\begin{extendedonly}
A practical way of representing $U$ is through a set of $q$ inequalities
\begin{equation}
    \label{eq:inputinequal}
    U(x) = \{u \in \R^m : e_i(x, u) \leq 0 \quad i = 1, \ldots, q\}
\end{equation}
where $e_i(x, u) : \D \times \R^m \to \R$ are scalar-valued functions that define state based input constraints. We assume $U$ can be written in this form for the rest of the section.

We further assume $e_i(x, u)$ are \emph{strictly quasiconvex} \cite{evans1970stability} in $u$ for a fixed $x$ and continuous in both $x$ and $u$. 
A strictly quasiconvex function $e: \D \to \R$ satisfies
\begin{equation}
    e(u_1) < e(u_2) \implies e(\theta u_1 + (1-\theta) u_2) < e(u_2)
\end{equation}
for $\theta \in (0, 1)$. We use strictly quasiconvex functions to generalize linear input constraints in the form $A(x)u \preceq b(x)$ to a certain class of convex input constraints. 
\end{extendedonly}

A set $\S \subseteq \D$ is \emph{positively controlled invariant} if there exists a continuous controller $k$ within the possible class of controllers $\mathcal{U}$ such that $\S$ is positively invariant with respect to the closed loop system $\dot{x} = f(x) + g(x)k(x)$ \cite[Def 4.4]{blanchini2008set}.

We now state the corresponding definition of minimal barrier functions for control affine systems.

\begin{definition}
\label{def:MCBF}
For the control affine system in \eqref{eq:contaf}, a  continuously differentiable function $h: \D \to \R$ is a \emph{minimal control barrier function} (MCBF) 
if there exists a minimal function $\mu$ such that for all $x \in \D$,
\begin{equation}
    \label{eq:CBFcond}
    \sup_{u \in U(x)} \left[ L_f h(x) + L_g h(x) u \right] \geq -\mu(h(x)) 
\end{equation}
where $L_f h(x) = \frac{\partial h}{\partial x} f(x)$ and $L_g h(x) = \frac{\partial h}{\partial x} g(x)$ denote corresponding Lie derivatives.
\end{definition}

The set of viable controls is  described by the point-to-set map $K : \D \rightrightarrows \R^m$ given by
\begin{equation}
    \label{eq:feascontset}
    K(x) = \{u \in U(x) : L_f h(x) + L_g h(x) u \geq -\mu(h(x))\}.
\end{equation}

Verification of the existence of controllers with certain properties can be treated as a selection problem, which has been extensively studied in topology \cite{michael1956continuous}. Specifically, the feedback controller $k$ is a \emph{selection} of $K$ if $k(x) \in K(x)$ for all $x \in \D$. Note that for a controller $k$ to render the set $\S$ invariant, it must necessarily be a selection from the point-to-set map $K$.

Additionally, the controller $k$ must come from the set of continuous viable controllers $\mathcal{U}$ in order to guarantee existence of solutions for the closed loop system. Furthermore, continuity of $k$ is also necessary to apply the differential inequality in Proposition \ref{prop:compare} and to satisfy the proposed definition of positive controlled invariance. 

\begin{thm}
\label{thmcntbar}
Given the control affine system \eqref{eq:contaf}, consider a nonempty $\S = \{x \in \D: h(x) \geq 0 \}$ for some continuously differentiable $h : \D \to \R$. If $h$ is a MCBF as in Definition \ref{def:MCBF} and there exists a continuous controller $k \in \mathcal{U}$ such that $k$ is a selection of $K$, then $\S$ is positively controlled invariant.
\end{thm}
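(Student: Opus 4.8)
The plan is to reduce directly to Theorem \ref{thmbarrier} by closing the loop with the given selection $k$. First I would set $F(x) := f(x) + g(x)k(x)$ and observe that since $f$, $g$, and $k$ are all continuous on $\D$, so is $F$; hence the closed-loop system $\dot{x} = F(x)$ satisfies the standing continuity hypothesis required of \eqref{eq:sys}, so the solution-existence machinery and Proposition \ref{prop:compare} apply to it.

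Next, I would verify that $h$ is a minimal barrier function for $\dot{x} = F(x)$ in the sense of Definition \ref{def:MBF}, using the same minimal function $\mu$ supplied by the MCBF hypothesis. The key computation is $L_F h(x) = \frac{\partial h}{\partial x}(x)\bigl(f(x) + g(x)k(x)\bigr) = L_f h(x) + L_g h(x)k(x)$. Since $k$ is a selection of $K$, we have $k(x) \in K(x)$ for every $x \in \D$, which by the definition of $K$ in \eqref{eq:feascontset} gives $L_f h(x) + L_g h(x)k(x) \geq -\mu(h(x))$. Therefore $L_F h(x) \geq -\mu(h(x))$ for all $x \in \D$, establishing that $h$ is an MBF for the closed-loop system.

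Then I would invoke Theorem \ref{thmbarrier} applied to $\dot{x} = F(x)$ to conclude that $\S$ is positively invariant for the closed-loop dynamics. Finally, since $k \in \mathcal{U}$ — that is, $k$ is continuous and respects the state-dependent input constraint $k(x) \in U(x)$ for all $x \in \D$ — this is precisely the statement that $\S$ is positively controlled invariant according to the definition given just before Definition \ref{def:MCBF}, completing the proof.

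The argument is essentially bookkeeping, and I do not anticipate a serious obstacle. The one point requiring care is recognizing the respective roles of the two conditions in the hypothesis: the MCBF inequality \eqref{eq:CBFcond}, being a statement about a supremum over $U(x)$, is not used directly in the invariance argument but only ensures $K(x)$ is nonempty (and thus that a selection can exist); the actual invariance proof runs entirely through the pointwise inequality satisfied by the chosen continuous selection $k$, whose continuity (guaranteed by $k \in \mathcal{U}$) is exactly what makes the closed-loop vector field $F$ continuous so that Theorem \ref{thmbarrier} applies verbatim.
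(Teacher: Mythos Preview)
Your proposal is correct and is precisely the argument the paper intends: the paper's own proof reads in full ``The proof is analogous to the proof of Theorem \ref{thmbarrier},'' and what you have written is exactly that analogy spelled out---close the loop with the continuous selection $k$, observe the closed-loop vector field is continuous, verify the MBF inequality $L_F h(x)\geq -\mu(h(x))$ from membership in $K(x)$, and invoke Theorem \ref{thmbarrier}.
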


\begin{extendedonly}
\begin{proof}
The proof is analogous to the proof of Theorem \ref{thmbarrier}.
\end{proof}
\end{extendedonly}

To guarantee existence of a continuous controller $k$, $K$ being nonempty is not sufficient, and additional conditions on $K$ must be assumed. The next theorem gives sufficient conditions on the existence of a continuous controller $k$ that is a selection of $K$ and therefore can be used to satisfy Theorem \ref{thmcntbar} to render $\S$ positively invariant.

With $U$ characterized as in \eqref{eq:inputinequal}, we also denote the strict interior $K_I: \D \rightrightarrows \R^m$ as
\begin{shortonly}
\begin{align}
    \label{eq:strictintK}
    K_I(x) &= \{u \in \R^m : A(x)u \prec b(x) \nonumber\\ 
    & \qquad \qquad \quad \ L_f h(x) +L_g h(x) u > -\mu(h(x))\}.
\end{align}
\end{shortonly}
\begin{extendedonly}
\begin{align}
    \label{eq:strictintK}
    K_I(x) &= \{u \in \R^m : e_i(x, u) < 0 \quad i = 1, \ldots, q, \nonumber\\ 
    & \qquad \qquad \quad \ L_f h(x) +L_g h(x) u > -\mu(h(x))\}.
\end{align}
\end{extendedonly}
where the input constraints are described with a strict inequality.

\begin{prop}
\label{prop:existcont}
Given $U$ is defined as in \eqref{eq:inputinequal},
if $K_I(x)$ as defined in \eqref{eq:strictintK} is nonempty for each $x$, then there exists a continuous controller $k$ that is a selection of $K$.
\end{prop}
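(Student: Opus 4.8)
The plan is to realize $K$ as the pointwise closure of the auxiliary map $K_I$ and then apply Michael's continuous selection theorem \cite{michael1956continuous}. First I would record the elementary structural facts about $K_I$: for each fixed $x$, the set $\{u : A(x)u \prec b(x)\}$ is an intersection of open half-spaces (more generally, $\{u : e_i(x,u) < 0\}$ is convex by strict quasiconvexity of $e_i$ in $u$), and $\{u : L_f h(x) + L_g h(x) u > -\mu(h(x))\}$ is an open half-space; hence $K_I(x)$ is convex, and it is nonempty by hypothesis. Likewise $K(x)$ is closed and convex.

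Next I would establish the identity $\overline{K_I(x)} = K(x)$ for every $x \in \D$. The inclusion $\overline{K_I(x)} \subseteq K(x)$ holds because $K(x)$ is closed and $K_I(x) \subseteq K(x)$. For the reverse inclusion, fix $u \in K(x)$ and, using the hypothesis, $u_0 \in K_I(x)$. For $\theta \in (0,1)$, the point $u_\theta := \theta u_0 + (1-\theta) u$ satisfies each affine constraint strictly, since $A(x)_i u_\theta = \theta A(x)_i u_0 + (1-\theta) A(x)_i u < \theta b(x)_i + (1-\theta) b(x)_i = b(x)_i$, and likewise the barrier inequality is strict at $u_\theta$; an analogous convexity argument applies in the strictly quasiconvex setting. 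Thus $u_\theta \in K_I(x)$, and letting $\theta \to 0^+$ gives $u \in \overline{K_I(x)}$. This step is exactly where nonemptiness of $K_I(x)$ is genuinely used: absent a strictly feasible point, $K(x)$ may be a proper face on which the strict constraints can never all hold simultaneously, and the conclusion fails.

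I would then show $K_I : \D \rightrightarrows \R^m$ is lower semicontinuous. Given $x_0$ and $u_0 \in K_I(x_0)$, all the defining inequalities are strict at $(x_0, u_0)$; by continuity of $A$, $b$, $L_f h$, $L_g h$ and $\mu \circ h$ (respectively, joint continuity of the $e_i$), each of these inequalities persists at $(x, u_0)$ for all $x$ in some neighborhood of $x_0$, so $u_0 \in K_I(x)$ there. This local persistence of a feasible point is precisely lower semicontinuity of $K_I$, and lower semicontinuity is preserved under pointwise closure, so by the previous paragraph $x \mapsto K(x) = \overline{K_I(x)}$ is lower semicontinuous with nonempty, closed, convex values.

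Finally, since $\D \subseteq \R^n$ is a metric space and therefore paracompact, and $\R^m$ is a Banach space, Michael's selection theorem \cite{michael1956continuous} provides a continuous map $k : \D \to \R^m$ with $k(x) \in K(x)$ for all $x \in \D$. Since $K(x) \subseteq U(x)$, we have $k \in \mathcal{U}$, and $k$ is a continuous selection of $K$, completing the argument. I expect the main obstacle to be the closure identity $\overline{K_I(x)} = K(x)$, and specifically verifying it for the general strictly quasiconvex input constraints, where one must argue that segments joining a strictly feasible point to a feasible point remain strictly feasible using the quasiconvexity inequality rather than linearity of the constraints.
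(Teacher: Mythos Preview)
Your proposal is correct and follows essentially the same route as the paper: establish $\overline{K_I(x)} = K(x)$ using strict feasibility, deduce that $K$ is lower semicontinuous with nonempty closed convex values, and invoke Michael's selection theorem. The only difference is presentational---you prove the closure identity and the lower semicontinuity of $K_I$ directly, whereas the paper cites \cite{evans1970stability} and \cite{hogan1973point} for these facts---but the logical structure is identical.
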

\begin{extendedonly}
\begin{proof}
Define $e_0:\D\times \mathbb{R}^m\to \mathbb{R}$ according to
\begin{equation}
    e_0(x, u) = -L_f h(x) - L_g h(x) u - \mu(h(x)).
\end{equation}
Notice that $e_0$ is also continuous in $x$ and $u$ and strictly quasiconvex in $u$ for each fixed $x$. The viable control map $K$ defined in \eqref{eq:feascontset} can then be described as 
\begin{equation}
    K(x) = \{u \in \R^m : e_i(x, u) \leq 0 \text{ for all } i = 0, \ldots, q\}. 
\end{equation}
Because $K_I(x)$ is assumed to be nonempty for each $x$, and all $e_i$ are continuous and strictly quasiconvex in $u$ for each fixed $x$, the closure $\overline{K_I(x)} = K(x)$ for all $x \in \D$ \cite[Lemma 5]{evans1970stability}, \cite{hogan1973point}, and therefore
$K$ is a l.s.c map \cite[Thm 13]{hogan1973point}. Furthermore, since all $e_i$  are continuous in $x$ and $u$ and strictly quasiconvex in $u$ for each fixed $x$, it holds that $K(x)$ is a closed, convex set in $\R^m$ for all $x \in \D$. Because $K$ is a l.s.c point-to-set map that maps to closed, convex subsets, there exists a continuous controller $k$ that is a selection of $K$ \cite[1.11 Thm 1]{aubin2012differential}.
\end{proof}
\end{extendedonly}

Proposition \ref{prop:existcont} is based on the well known Michael's selection theorem \cite{michael1956continuous}. \change{Proposition \ref{prop:existcont} considers the converse direction of Theorem \ref{thmcntbar}, namely what conditions on $h$ are necessary for there to exist a controller to render $\S$ invariant.}

Usually, a controller $k$ is selected from $K$ based on some optimality criteria. A common approach for safety based control is to first obtain a \emph{nominal controller} $k_{nom}: \D \to \R^m$ that is not verified for either guaranteeing invariance or satisfying input constraints. The nominal controller is then used within a quadratic optimization program (QP) in which $\hat{k}$ is selected from $K$, while minimizing the distance from $k_{nom}(x)$ at each $x$, that is,
\begin{equation}
    \label{eq:uhat}
    \hat{k}(x) = \argmin_{u \in K(x)}{\|u - k_{nom}(x) \|^2}
\end{equation}

Synthesizing controllers in this fashion can allow for real-time control synthesis that satisfies both performance objectives and safety constraints. For a review of applications of this framework, see \cite{ames2019control}. 

Properties of the controller $\hat{k}$ can be analyzed as a selection of $K$, and conditions on $K$ can be formulated to guarantee continuity of $\hat{k}$. In \cite{amesrobust} and \cite{morris2015continuity}, Lipschitz continuity of controllers for quadratic programs regarding a minimum norm controller of $k_{nom} \equiv 0$ for all time was explored. In 
\cite{morris2015continuity}, conditions for pointwise continuity were given, but in this paper, we  show continuity of the controller over the whole domain.

The next theorem gives practical conditions on when the quadratic program in \eqref{eq:uhat} gives a continuous controller. 

\begin{extendedonly}
First we formulate conditions for a controller $\hat{k}$ defined below by a general nonlinear optimization program to be continuous.

\begin{equation}
    \label{eq:uhatloss}
    \hat{k}(x) = \argmin_{u \in K(x)}{\ell(u, k_{nom}(x))}
\end{equation}
where $\ell: \R^m \times \R^m \to \R^+$ is some loss function.

\begin{lemma}
Consider $\hat{k}$ defined by the optimization problem \eqref{eq:uhatloss}. If $K$ defined in \eqref{eq:feascontset} is a continuous, nonempty point-to-set-map, the nominal controller $k_{nom}$ is continuous, $\ell(u_1,u_2)$ is continuous in $u_1$ and $u_2$, and there exists a unique minimizer $u \in K(x)$ of $\ell(\cdot, k_{nom}(x))$ for each $x \in \D$, then the controller $\hat{k}$ is continuous.
\label{lemma:contcont}
\end{lemma}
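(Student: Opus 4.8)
The plan is to read this as a standard parametric-optimization continuity statement and deduce it from Berge's Maximum Theorem. First I would package the data: define the joint objective $\Phi : \D \times \R^m \to \R^+$ by $\Phi(x,u) = \ell(u, k_{nom}(x))$, which is continuous in $(x,u)$ since $k_{nom}$ is continuous and $\ell$ is jointly continuous, and recall that by hypothesis the feasible-set map $K$ from \eqref{eq:feascontset} is a nonempty, continuous (both upper and lower semicontinuous) correspondence $\D \rightrightarrows \R^m$, with closed values (it is cut out by non-strict inequalities). With these ingredients the value function $v(x) = \min_{u \in K(x)} \Phi(x,u)$ is continuous and the optimal-set correspondence $M(x) = \argmin_{u \in K(x)} \Phi(x,u)$ is nonempty-valued and upper semicontinuous.

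Next I would bring in the uniqueness hypothesis: for each $x$ there is exactly one minimizer, so $M(x) = \{\hat k(x)\}$ is single-valued and coincides with the map $\hat k$ of \eqref{eq:uhatloss}. It then remains to observe that an upper semicontinuous correspondence with singleton values is a continuous function: fixing $x_0 \in \D$ and an open $G \subseteq \R^m$ with $\hat k(x_0) \in G$, upper semicontinuity of $M$ applied to the open set $G \supseteq M(x_0) = \{\hat k(x_0)\}$ yields a neighborhood $\mathcal{N}$ of $x_0$ with $\{\hat k(x)\} = M(x) \subseteq G$ for all $x \in \mathcal{N}$; hence $\hat k(\mathcal{N}) \subseteq G$, which is exactly continuity of $\hat k$ at $x_0$. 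Since $x_0$ is arbitrary, $\hat k$ is continuous on $\D$.

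Equivalently, and in a more self-contained way, I would argue sequentially: fix $x_n \to x_0$. Lower semicontinuity of $K$ (applied to shrinking balls around $\hat k(x_0)$, with a diagonal selection) gives feasible $u_n \in K(x_n)$ with $u_n \to \hat k(x_0)$, so $v(x_n) \le \Phi(x_n,u_n) \to \Phi(x_0,\hat k(x_0)) = v(x_0)$ and thus $\limsup_n v(x_n) \le v(x_0)$. In the other direction, any subsequential limit $\bar u$ of $\{\hat k(x_n)\}$ lies in $K(x_0)$ (closed graph of $K$, from upper semicontinuity and closed values) and satisfies $\Phi(x_0,\bar u) \le \liminf_n v(x_n) \le v(x_0)$, so $\bar u$ minimizes $\Phi(x_0,\cdot)$ over $K(x_0)$ and therefore $\bar u = \hat k(x_0)$ by uniqueness. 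As every subsequential limit equals $\hat k(x_0)$, we conclude $\hat k(x_n) \to \hat k(x_0)$.

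The step I expect to be the main obstacle is the one guaranteeing that the minimizers $\hat k(x_n)$ do not escape to infinity, i.e. that a subsequential limit exists at all — precisely where the classical Maximum Theorem invokes compact-valuedness of $K$. Under the standing assumptions this is supplied by the problem structure: for the quadratic program of interest, $\ell(u,k_{nom}(x)) = \|u - k_{nom}(x)\|^2$ is coercive in $u$, so the sublevel sets $\{u : \Phi(x,u) \le v(x_0)+1\} \cap K(x)$ remain inside a fixed compact set for $x$ near $x_0$; intersecting $K$ with that compact set does not change $K$, $v$, or $M$ locally and makes Berge's theorem directly applicable. I would therefore make explicit, as the operative hypothesis in the proof, that $\Phi(x,\cdot)$ is inf-compact along $K(x)$ locally uniformly in $x$ (automatic from coercivity of $\ell$ or from $K$ being locally bounded), after which the argument above is complete.
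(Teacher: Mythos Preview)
Your approach is essentially the same as the paper's: both reduce the claim to Berge's Maximum Theorem, obtain upper semicontinuity of the argmin correspondence, and then use the uniqueness hypothesis to conclude that a singleton-valued u.s.c.\ correspondence is a continuous function. The packaging differs slightly: the paper lifts to the product correspondence $K \times k_{nom}'$ on $\R^m \times \R^m$ with $k_{nom}'(x) = \{k_{nom}(x)\}$ and minimizes $\ell(u_1,u_2)$ over that product, then projects; you instead absorb $k_{nom}$ into the objective by setting $\Phi(x,u) = \ell(u,k_{nom}(x))$ and work directly on $K$. Both routes are equivalent and yours is arguably more direct. Your explicit discussion of the compactness/boundedness needed to invoke Berge is a genuine improvement in rigor: the paper's lemma statement and proof do not isolate this hypothesis, and it is only supplied later in Theorem~\ref{prop:quadprog} via the assumption that $\bigcup_{x\in\D} U(x)$ is compact. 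Your observation that coercivity of $\ell$ in $u$ (or local boundedness of $K$) suffices to localize to a compact-valued correspondence without changing the minimizers is exactly the right patch.
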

\end{extendedonly}
\begin{extendedonly}
\begin{proof}
Since $k_{nom}$ is a continuous function, it trivially induces the continuous singleton point-to-set map $k_{nom}'(x) = \{k_{nom}(x)\}$. Since $k_{nom}'$ and $K$ are continuous point-to-set maps, the Cartesian product
\begin{equation}
    K \times k_{nom}': \D \rightrightarrows \R^m \times \R^m
\end{equation}
is a continuous point-to-set map as well \cite[Sec 6.4, Thm 4, 4']{berge1997topological}. Define
\begin{equation}
    V(x) = \inf \{\ell(u_1, u_2): (u_1, u_2) \in K(x) \times k_{nom}'(x)\}
\end{equation}
and let the optimal selection function be 
\begin{equation}
        \Phi(x) = \{
    (u_1, u_2) \in K(x) \times k_{nom}'(x)
    : \ell(u_1, u_2) = V(x) \}.
\end{equation}
By assumption of a unique minimizer $u_1$ for a fixed $u_2=k_{nom}(x)$, $\Phi$ is a singleton point-to-set map. We show that $\Phi$ is a u.s.c map. Since $K \times k_{nom}'$ is a continuous point-to-set map and $\ell$ is a continuous function, $V$ is a continuous function \cite[Max Thm 4.2]{berge1997topological}. The point-to-set map
\begin{align}
    \Delta(x) = \{(u_1, u_2) &\in K(x) \times k_{nom}'(x):  \\ 
    \nonumber &V(x) - \ell(u_1, u_2) \leq 0\}
\end{align}
is a closed point-to-set map, since $V$ is continuous in $x$ and $\ell$ is continuous in $u_1$ and $u_2$ \cite[Thm 10]{hogan1973point}. Notice that
\begin{equation}
    \Phi(x) = \left(K(x) \times k_{nom}'(x) \right)\cap \Delta(x).
\end{equation}
Because $K$ is assumed to be continuous map and $\Delta$ is a closed map, $\Phi$ is an u.s.c map \cite[Sec 6.1 Thm 7]{berge1997topological}. Since $\Phi$ is an u.s.c singleton point-to-set map, it directly induces a continuous function $\Phi'$ defined by $\Phi(x)=\{\Phi'(x)\}$
\cite{berge1997topological}. Take
\begin{equation}
    \hat{k}(x) = (p \circ \Phi')(x)
\end{equation}
where $p$ is the projection function from $(u_1, u_2)$ to $u_1$.
It follows that the optimal controller $\hat{k}$ is a continuous function since $\Phi'$ is continuous. 
\end{proof}

Now we can formulate more specific conditions for a controller defined by a quadratic program as in \eqref{eq:uhat}.
\end{extendedonly}

\begin{thm}
\label{prop:quadprog}
Given $U$ is defined as in \eqref{eq:inputinequal} with $\bigcup_{x \in \D} U(x)$ being compact and $k_{nom}$ is continuous in $x$, if $K_I(x)$ as defined in \eqref{eq:strictintK} is nonempty for each $x$, the controller $\hat{k}$ defined by the  quadratic program in \eqref{eq:uhat}
is continuous.
\end{thm}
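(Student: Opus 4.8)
The plan is to present the quadratic program \eqref{eq:uhat} as the special case of the general program \eqref{eq:uhatloss} obtained by taking the loss $\ell(u_1,u_2) = \|u_1 - u_2\|^2$, and then to verify the four hypotheses of Lemma \ref{lemma:contcont}: continuity of $\ell$, continuity of $k_{nom}$, continuity and nonemptiness of the point-to-set map $K$ from \eqref{eq:feascontset}, and uniqueness of the minimizer of $\ell(\cdot,k_{nom}(x))$ over $K(x)$ for each $x\in\D$. Continuity of $\ell$ in both arguments is immediate, and $k_{nom}$ is continuous by hypothesis. Nonemptiness of $K(x)$ follows from the assumption $K_I(x)\neq\varnothing$, since the strict inequalities defining $K_I(x)$ imply the corresponding non-strict ones, i.e. $K_I(x)\subseteq K(x)$. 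So the work reduces to showing (i) $K$ is a continuous point-to-set map and (ii) the minimizer is unique.

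For lower semicontinuity of $K$ I would reuse the argument already carried out in the proof of Proposition \ref{prop:existcont}: with $e_0(x,u) = -L_f h(x) - L_g h(x)u - \mu(h(x))$, the set $K(x) = \{u : e_i(x,u)\leq 0,\ i=0,\dots,q\}$ is the intersection of the sublevel sets of finitely many functions that are continuous in $(x,u)$ and strictly quasiconvex in $u$; together with $K_I(x)\neq\varnothing$ this gives $\overline{K_I(x)} = K(x)$ and hence that $K$ is l.s.c. \cite[Lemma 5]{evans1970stability},\cite[Thm 13]{hogan1973point}. For upper semicontinuity I would invoke the new hypothesis that $C := \bigcup_{x\in\D} U(x)$ is compact. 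Since $K(x)\subseteq U(x)\subseteq C$ for every $x$ and the graph $\{(x,u): e_i(x,u)\leq 0,\ i=0,\dots,q\}$ of $K$ is closed (all $e_i$ are continuous), $K$ is a closed point-to-set map whose values all lie in the fixed compact set $C$, and such a map is u.s.c. \cite{berge1997topological,hogan1973point}; in particular each $K(x)$ is compact. Combining, $K$ is a continuous, nonempty-valued point-to-set map.

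It remains to establish uniqueness of the minimizer. Strict quasiconvexity of each $e_i$ in $u$ makes every sublevel set $\{u : e_i(x,u)\leq 0\}$ convex, so $K(x)$ is convex, and we have just shown it is nonempty and compact (in particular closed). The map $u\mapsto \|u - k_{nom}(x)\|^2$ is strictly convex and attains its infimum on the nonempty compact set $K(x)$ at a single point, this being the standard uniqueness of the metric projection of $k_{nom}(x)$ onto a closed convex subset of $\R^m$. Hence all hypotheses of Lemma \ref{lemma:contcont} hold, and $\hat{k}$ is continuous.

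I expect the upper semicontinuity of $K$ to be the main obstacle: lower semicontinuity and nonemptiness come essentially for free from Proposition \ref{prop:existcont}, but u.s.c. genuinely requires the new compactness assumption on $\bigcup_{x\in\D}U(x)$ — a merely closed-graph map need not be u.s.c. — and one must be careful to cite the correct point-to-set topology result, namely that a closed map with range contained in a fixed compact set is u.s.c., rather than just that the map is closed. A secondary point to handle cleanly is that convexity of $K(x)$, needed for uniqueness of the projection, rests on the strict quasiconvexity of the constraint functions $e_i$ that is assumed throughout Section \ref{sec:Control}.
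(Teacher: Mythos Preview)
Your proposal is correct and follows essentially the same route as the paper's (extended) proof: reduce to Lemma~\ref{lemma:contcont}, obtain l.s.c.\ of $K$ by reusing the argument of Proposition~\ref{prop:existcont}, obtain u.s.c.\ from closedness of $K$ together with the compactness of $\bigcup_{x\in\D}U(x)$ via \cite[Sec 6.1 Thm 7]{berge1997topological}, and get uniqueness of the minimizer from convexity of $K(x)$ and strict convexity of $\|u-k_{nom}(x)\|^2$. The citations and the identification of u.s.c.\ as the step that genuinely requires the new compactness hypothesis match the paper exactly.
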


\begin{shortonly}
\begin{proof}
\change{
We first claim that $K$ is upper (u.s.c) and lower semicontinuous (l.s.c) and thus a continuous point-to-set map (see section \ref{sec:Nagumo} in the extended version for definitions). Let $K(x)$ be written as the inequality constraint $G(x, u) \leq 0$ with $G(x, u) = [A(x); -L_g h(x)] u - [b(x) ; -\mu(h(x)) - L_f h(x)]$. As $G$ is continuous in both $x$ and $u$, $K$ is a closed map \cite[Thm 10]{hogan1973point} and since $K$ is assumed to map into the compact set $\bigcup_{x \in \D} U(x)$, $K$ is u.s.c \cite[Sec 6.1 Thm 7]{berge1997topological}. Furthermore, $K_I(x)$ is assumed to be nonempty for each $x$ and $K$ is affine in $u$ for each fixed $x$. Therefore, the closure of $K_I(x) = K(x)$ \cite{hogan1973point} and then $K$ is l.s.c \cite[Thm 13]{hogan1973point} as well, and the claim is shown.}

\change{The induced point-to-set map $k'_{nom}(x) = \{ k_{nom}(x)\}$ is continuous by assumption of continuity of $k_{nom}$. Thus, the product map $K \times k'_{nom}$ is continuous \cite[Sec 6.4 Thm 4]{berge1997topological}. Since $K(x) \times k'_{nom}(x) \subset \R^m \times \R^m$ is convex and compact and $\vnorm{u - k_{nom}(x)}^2$ is strictly convex in $u$ for a fixed $x$, there is a unique minimizer of $K \times k'_{nom}$ for each $x$. So $\argmin_{u \in K(x)}{\vnorm{u - k_{nom}(x)}^2} \times k_{nom}(x)$ is a continuous function \cite[Sec 6.3 Max Thm ]{berge1997topological}. Projecting down to the first argument preserves continuity, and therefore, by definition, $\hat{k}$ is continuous. }
\end{proof}
\end{shortonly}

\begin{extendedonly}
\begin{proof}
It is shown in the proof of Proposition \ref{prop:existcont} that $K$ is an l.s.c point-to-set map that maps to convex sets in $\R^m$. As all $e_i$ are continuous, $K$ is a closed mapping \cite[Thm 10]{hogan1973point}, and since $K$ is assumed to map into the compact set $\bigcup_{x \in \D} U(x)$, $K$ is u.s.c \cite[Sec 6.1 Thm 7]{berge1997topological}. Therefore $K$ is a continuous point-to-set map. Since $K$ maps to a convex set and $\ell(u, k_{nom}(x)) = \begin{Vmatrix} u - k_{nom}(x) \end{Vmatrix}^2$ is strictly convex in $u$ for a fixed $k_{nom}(x)$ for each $x \in \D$, there exists at most one solution to the quadratic program. Because $K$ is a continuous, nonempty point-to-set map, $k_{nom}$ is continuous in $x$, and $\ell$ is continuous in $u_1$ and $u_2$ and there is a unique minimizer of $\ell(\cdot, k_{nom}(x))$ for each $x \in \D$, by Lemma \ref{lemma:contcont}, $\hat{k}$ is a continuous controller that is a selection of $K$.
\end{proof}
\end{extendedonly}

If our viable control set $K(x)$ is compact and strictly feasible everywhere, Theorem \ref{prop:quadprog} ensures that the optimal controller $\hat{k}$ is continuous. Since $\hat{k}(x) \in K(x)$ for all $x \in \D$ as well, $\hat{k}$ is applicable to Theorem \ref{thmcntbar}, guaranteeing positive invariance of $\S$.

\begin{extendedonly}
The next example verifies that a program with a quadratic cost and linear constraints returns a continuous controller.
\begin{example}
Consider a system $\dot{x} = u$ for the domain $\D = (-1, \infty)$, a barrier function $h(x) = x$, and a minimal function $\mu(w) = w$. Furthermore, let the input constraints be a state-independent box constraint $U(x) =  \{u \in \R : -1 \leq u \leq 1 \}$ and the desired nominal controller be $k_{nom}(x) \equiv 0$. Then we can define the following quadratic program to generate an optimal controller 
\begin{align*}
    \hat{k}(x) = \argmin_{u \in \R^m} & \  \begin{Vmatrix} u \end{Vmatrix}^2 \quad \\\
   \mathrm{s.t.} \quad   & u \geq \frac{-L_f h(x) - \mu(h(x))}{L_g h(x)} = -x \\
   & u \geq -1 \\
   -& u \geq -1.
\end{align*}
Solving for the explicit controller gives 
\begin{align}
    \hat{k}(x) = 
    \begin{cases}
    -x & 0 \geq x > -1\\ 
    0 & x \geq 0
    \end{cases}
\end{align}
Notice all of the assumptions for Proposition \ref{prop:quadprog} are satisfied, and indeed the resulting controller $k$ is continuous in $x$. We also observe that there is no feasible solution on the interval $\{x < -1 \}$.
\end{example}
\end{extendedonly}

\section{Conclusion}
\label{sec:Conclusion}
This paper presents minimal barrier functions, which stem from scalar differential inequalities, to give the minimum assumptions for utilizing a continuously differentiable barrier function. We have characterized a class of comparison systems viable for verifying invariance of sets defined via a barrier function inequality and have proposed equivalent computable conditions. 
By formulating necessary and sufficient conditions for minimal barrier functions, the relation to Nagumo's theorem is also elucidated.
We then directly extend minimal barrier functions to control formulations and propose relevant conditions for the existence of valid continuous controllers.
By characterizing this relationship with the classical approach of verifying set invariance and rooting the proposed formulation directly in differential inequalities, this paper aims to provide a theoretical foundation for minimal barrier functions. \change{Possible extensions include generalizing minimal barrier functions to hybrid systems.}

\bibliographystyle{ieeetr}
\bibliography{references.bib}

\begin{extendedonly}
\appendices

\section{Extending to Time-Varying Barrier Functions}

In this appendix, we extend the above results  to nonautonomous systems and/or nonautonomous barrier functions. This is crucial when either the vector field of the system or the set under inspection is a function of time. In this section, we study the time varying system
\begin{equation}
    \dot{x} = f(t, x)
    \label{eq:tvsys}
\end{equation}
where $f: [0, \infty) \times \D \to \R^n$ is continuous in $t$ and in $x$. A solution $x(t)$ is defined on a maximum time interval $I[x(\cdot)]=[t_0,\tau_{\max})$ such that $x(t) \in \D$ and \eqref{eq:tvsys} is satisfied for $t \in I[x(\cdot)]$ with an initial condition $x(t_0) = x_0\in \D$ for $t_0\geq 0$, and the solution cannot be extended for time beyond $\tau_{\max}$.

Consider $\S \subseteq [0, \infty) \times \D$ so that $\S_t := \{x \in \D: (t, x) \in \S \}$ is nonempty for all $t\geq 0$. $\S$ is \emph{positively invariant} for \eqref{eq:tvsys} if for all $t_0\geq 0$ the condition $x(t_0)\in \S_{t_0}$ implies all corresponding solutions $x(t)$ satisfy $x(t)\in \S_t$ for all $t\in [t_0,\tau_{\max})$ \cite{blanchini2008set}. $\S$ is \emph{weakly positively invariant} for \eqref{eq:tvsys} if for all $t_0\geq 0$ the condition $x(t_0)\in \S_{t_0}$ implies the existence of a solution $x(t)$ that satisfies $x(t)\in \S_t$ for all $t\in [t_0,\tau_{\max})$. We assume that $\S = \{(t,x) \in [0, \infty) \times \D : h(t, x) \geq 0 \}$  for a function $h$ that is continuously differentiable in both arguments. 

Given a scalar initial value problem $\dot{w}=g(t,w)$, $w(t_0)=w_0$ with $g:[0,\infty)\times W\to \mathbb{R}$ continuous in $t$ and $w$ and for open set $W\subseteq \mathbb{R}$, $t_0\geq 0$, and $w_0\in W$, solutions and minimal solutions are defined analogously to the definitions in Section \ref{sec:Background}.

We slightly modify the definition of minimal barrier functions for time varying formulations.
\begin{definition}
\label{def:minimalvarying}
A continuous function $\mu: [0, \infty) \times \R \to \R$ is a \emph{time varying minimal function} if any minimal solution $\Tilde{w}(t)$ defined on $t \in [t_0, \tau)$ to the initial value problem $\dot{w} = -\mu(t, w) $, $w(t_0) = 0$  satisfies $w(t) \geq 0$ for all $t \in [t_0, \tau)$ for any $t_0 \in [0, \tau)$.
\end{definition}

A sufficient condition for time varying minimal functions is that solutions to $\dot{w}=-\mu(t,w)$ are unique for any initial conditions and that $\mu(t, 0) \leq 0 \ \forall t \geq 0$. Finding necessary and sufficient conditions analogous to Theorem \ref{thm:minimal} is challenging since separation of variables is not possible to get an explicit integral expression.

The following definition parallels Definition \ref{def:MBF} for the time varying case.

\begin{definition}
\label{def:TMBF}
For the system in \eqref{eq:tvsys}, a continuously differentiable function $h:[0, \infty) \times \D \to \R$ is a \emph{time varying minimal barrier function} (TMBF) if there exists a time varying minimal function $\mu$ that satisfies
\begin{align}
    \label{eq:thmvarying}
    \frac{\partial h}{\partial t}(t,x) + L_f h(t, x)  \geq -\mu(t, h(t, x)) \quad \forall (t, x) \in [0, \infty) \times \D.
\end{align}
\end{definition}

\begin{thm}
\label{thmbarriervarying}
Consider the system \eqref{eq:tvsys} and a nonempty $\S = \{(t, x) \in [0, \infty) \times \D : h(t, x) \geq 0 \}$ for some continuously differentiable $h: [0, \infty) \times \D \to \R$. If $h$ is a TMBF as in Definition \ref{def:TMBF}, then $\S$ is positively invariant.
\end{thm}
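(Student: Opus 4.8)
The plan is to mirror the proof of Theorem \ref{thmbarrier}, replacing the autonomous comparison system by its nonautonomous counterpart. Fix an arbitrary $t_0 \geq 0$ and let $x(t)$ be any maximal solution of \eqref{eq:tvsys} on $I[x(\cdot)] = [t_0, \tau_{\max})$ with $x(t_0) = x_0 \in \S_{t_0}$, so that $h(t_0, x(t_0)) \geq 0$. Define the scalar function $\eta(t) := h(t, x(t))$, which is differentiable on $[t_0, \tau_{\max})$ with $\dot{\eta}(t) = \frac{\partial h}{\partial t}(t, x(t)) + L_f h(t, x(t))$; by the defining inequality \eqref{eq:thmvarying} of a TMBF this yields $\dot{\eta}(t) \geq -\mu(t, \eta(t))$ and $\eta(t_0) \geq 0$. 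Consider the comparison system $\dot{w} = -\mu(t, w)$, $w(t_0) = 0$, and let $\Tilde{w}(t)$ denote its minimal solution; since $\mu$ is a time varying minimal function (Definition \ref{def:minimalvarying}), $\Tilde{w}(t) \geq 0$ on its interval of existence.

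The first step is to establish a nonautonomous version of Proposition \ref{prop:compare}: if $\eta$ is differentiable with $\dot{\eta}(t) \geq -\mu(t, \eta(t))$ and $\eta(t_0) \geq w_0$, then $\eta(t) \geq \Tilde{w}(t)$ on the common interval of existence. This follows by the same perturbation argument used for Proposition \ref{prop:compare}: one approximates $\Tilde{w}$ by the minimal solutions $r_n$ of $\dot{r}_n = -\mu(t, r_n) - \epsilon_n$, $r_n(t_0) = w_0 - \epsilon_n$, shows $\eta(t) > r_n(t)$ by the first-crossing-time contradiction (using $\dot{\eta}(T) > \dot{r}_n(T)$ at any hypothetical first contact $T$), and then passes to the limit $n \to \infty$ using monotonicity of $\{r_n\}$ in $n$ together with uniform continuity of $\mu$ on the relevant compact subsets of $[0,\infty) \times \R$. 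Alternatively one may simply invoke the standard comparison theorem for scalar differential inequalities \cite{lakshmikantham1969differential,bainov2013integral}, which is already stated for time-dependent right-hand sides; the bulk of this step is therefore a verbatim translation of the existing argument.

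Next I would handle the interval of existence exactly as in the proof of Theorem \ref{thmbarrier}. Suppose for contradiction that $\Tilde{w}$ is defined only on $[t_0, \tau^*)$ with $\tau^* < \tau_{\max}$. Since the right-hand side $-\mu(t,w)$ is continuous on $[0,\infty) \times \R$, a maximal solution with finite right endpoint $\tau^*$ must satisfy $\lim_{t \to \tau^{*-}} |\Tilde{w}(t)| = +\infty$ \cite[Corollary 1.1.2]{lakshmikantham1969differential}, and because $\Tilde{w}(t) \geq 0$ this forces $\lim_{t \to \tau^{*-}} \Tilde{w}(t) = +\infty$. Applying the comparison result on $[t_0, \tau^*)$ gives $\eta(t) = h(t, x(t)) \geq \Tilde{w}(t) \to +\infty$. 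But on the compact interval $[t_0, \tau^*]$ the point $x(t)$ stays in $\D$ (as $\tau^* < \tau_{\max}$), so $t \mapsto h(t, x(t))$ is a continuous map of a compact set into $\R$ and hence bounded there, a contradiction. Therefore $\Tilde{w}$ is defined on all of $[t_0, \tau_{\max})$.

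With both ingredients in place, the comparison result yields $h(t, x(t)) = \eta(t) \geq \Tilde{w}(t) \geq 0$ for all $t \in [t_0, \tau_{\max})$, i.e. $x(t) \in \S_t$ throughout the interval of existence; since $t_0 \geq 0$, $x_0 \in \S_{t_0}$, and the maximal solution were arbitrary, $\S$ is positively invariant. The main obstacle is the nonautonomous comparison lemma: the argument is routine, but one must check that the perturbation/limiting construction of Proposition \ref{prop:compare} still closes with the explicit $t$-dependence, in particular that $\mu$ is uniformly continuous on the compact $t$-$w$ rectangles that arise and that the approximating solutions $r_n$ remain defined on the full interval for large $n$; everything else carries over unchanged from the autonomous proof.
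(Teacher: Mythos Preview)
Your proposal is correct and follows exactly the approach the paper intends: the paper's own proof consists of the single line ``Analogous to the proof of Theorem \ref{thmbarrier},'' and your write-up is precisely that analogy spelled out, including the nonautonomous comparison lemma (which is indeed the standard time-varying version already available in \cite{lakshmikantham1969differential,bainov2013integral}) and the same finite-escape contradiction to align the intervals of existence.
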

\begin{proof}
Analogous to the proof of Theorem \ref{thmbarrier}.
\end{proof}

\begin{rem}
Time varying systems and/or barrier functions can also be handled by transforming to a time invariant system by appending an indicator state $\dot{\theta} = 1$ for time. Then it is possible to apply the standard formulations of minimal barrier function. However, for certain specifications, it is not possible to find a time invariant minimal function and a time varying minimal function is necessary.
\end{rem}

The following examples illustrate scenarios in which a time varying formulation is necessary. We first look at a system with a time varying vector field and a time invariant barrier function.

\begin{example}
\label{ex:tvfield}
Consider $\dot{x} = f(t,x)=xt$ for $x\in\mathbb{R}$ and let $h(x) = x$ so that $\S = \{x : x \geq 0 \}$. Along solutions, $L_f h(x) = xt = h(x)t$. Take $\mu(t, w) = -wt$ so that $L_f h(x) = -\mu(t,h(x))$. The function $\mu$ is smooth in both $t$ and $w$, so $\dot{w} = -\mu(t, w)$ has unique solutions. Since $\mu(t, 0) = 0$, $\mu$ is a time varying minimal function, and $h$ is a time varying minimal barrier function by Theorem \ref{thmbarriervarying} so that $\S$ is positively invariant. 

It can be seen that a time invariant minimal function does not exist for this system and barrier. For $\mu(w)$ to be a minimal function, $\frac{\partial h}{\partial x} (x) f(t,x) = x t \geq -\mu(h(x))=-\mu(x)$ for all $t \geq 0$ and all $x\in\mathbb{R}$, but there does not exist a smooth function $\mu$ satisfying $wt\geq -\mu(w)$ for all $t\geq 0$. Specifically,  the inequality does not hold for any $w\in\R^{-}$.
\end{example}

Now we consider a case where the barrier function is time varying but the vector field is time invariant. These formulations are important, for example, when considering barrier functions as an approach to verify reachability \cite{lindemann2019control}.

\begin{example}
\label{ex:tvbarrier}
Consider $\dot{x} = f(x)= x^3 + x$ for $x\in\mathbb{R}$ and let $h(t, x) = e^{-t}x - e^{-2t}$. At $t = 0$, $\S_0 = \{x : x \geq 1\}$, and $\S_t$  increases to $\{x : x \geq 0 \}$ as $t \to \infty$. Along solutions, $L_f h(t, x) = e^{-t}x^3 + 2 e^{-2t} = e^{-t}(h(x) e^t + e^{-t})^3 + 2 e^{-2t} = h^3 e^{2t} + 3h^2 + 3h e^{-2t} + e^{-4t} + 2 e^{-2t}$. Let $\mu(t, h) = -L_f h(t, x)$ where $L_f h(t, x)$ is understood to be a function of $t$ and $h$ as just computed. Then $\mu(t, h)$ can be verified to be a valid minimal function. However, like Example \ref{ex:tvfield}, no time-invariant minimal function exists for this system and barrier function.
\end{example}

\section{Converting to a Lyapunov function}
\label{app:lyap}
We now relate minimal barrier functions to (set based) Lyapunov functions. Notice that the flow constraint in condition \eqref{eq:thm} is similar to the standard Lyapunov flow constraint. And indeed, as $h$ is a scalar function defining a positively invariant set $\S$, we can also utilize $h$ to verify stability properties of $\S$ as well. 

First we recall some notions of stability for dealing with general sets in $A \subset \R^n$ rather than just points.
Let $\rho(x, A) = \inf \{ \|y - x \| : y \in A \}$ denote the distance from $x$ to a set $A$. 

A closed, invariant set $A \subset \D$ is \emph{(uniformly) stable} if for any $\delta > 0$, there exists $\epsilon > 0$ such that for any $x_0$ that satisfies $\rho(x_0, A) < \epsilon$, the solution $x(t)$ satisfies $\rho(x(t), A) < \delta$ for $t \in I[x(\cdot)]$ \cite[Def 1.6.1]{bhatia2006dynamical}.

A set $A \subset \D$ is \emph{asymptotically stable} if it is uniformly stable and there exists a $\delta > 0$ such that for all $x_0$ with $\rho(x_0, A) < \delta$, $\rho(x(t), A) \to 0$ as $t \to \infty$ \cite[Def 1.6.26] {bhatia2006dynamical}.

The next theorem is similar to Proposition 1 in \cite{xu2015robustness}, but is specific to minimal barrier functions and also considers uniform stability of $\S$.

A continuous function $\alpha : \R^+ \to \R^+$ is of class $\mathcal{K}$ if it is strictly increasing and $\alpha(0) = 0$.

\begin{prop}
\label{prop:robust}
Let $h$ be a minimal barrier function for \eqref{eq:sys} and $\S=\{x \in \D :h(x) \geq 0\}$. Assume there exists class $\mathcal{K}$ functions $\alpha, \beta$ and a constant $\delta > 0$ such that
\begin{equation}
    \label{eq:boundedclassk}
    -\beta(\rho(x, \S)) \leq h(x) \leq -\alpha(\rho(x, \S)) \ \forall x \in \S_{\delta} \setminus \S
\end{equation}
where $\S_{\delta} = \{x \in \D : h(x) \geq -\delta \}$.
If there exists a minimal function $\mu$ for $h$ such that \eqref{eq:thm} is satisfied and $\mu(w) \leq 0$ for all $-\delta \leq w < 0$, then $\S$ is uniformly stable. Furthermore if the system \eqref{eq:sys} is forward complete and $\mu(w) < 0$ for all $-\delta \leq w < 0$, then $\S$ is asymptotically stable.
\end{prop}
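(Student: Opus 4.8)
The plan is to use $-h$, restricted to the shell $\S_\delta\setminus\S$, as a Lyapunov-type function for $\S$, reducing everything to the scalar quantity $\eta(t):=h(x(t))$ via the comparison result of Proposition~\ref{prop:compare}. As a preliminary normalization I would replace $\mu$ by the truncation $\mu_\delta$ given by $\mu_\delta(w)=\min\{\mu(w),0\}$ for $w<0$ and $\mu_\delta(w)=\mu(w)$ for $w\geq 0$. Since $\mu(0)\leq 0$ (continuity of $\mu$ together with the hypothesis $\mu(w)\leq 0$ on $[-\delta,0)$), $\mu_\delta$ is continuous; since $\mu_\delta\leq 0$ on $(-\infty,0)$ it falls under Case~\ref{item:1} or Case~\ref{item:2} of Theorem~\ref{thm:minimal}, hence is a minimal function; and since $\mu_\delta\leq\mu$ pointwise, $h$ still satisfies $L_f h(x)\geq-\mu_\delta(h(x))$ for all $x\in\D$. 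The payoff is that now $-\mu_\delta(w)\geq 0$ whenever $w<0$. The key scalar fact is then: any solution $w(\cdot)$ of $\dot w=-\mu_\delta(w)$ with $w(0)=w_0<0$ satisfies $w(t)\geq w_0$ on its whole interval of existence. Indeed, if $w(t^\ast)<w_0$, set $t_1=\sup\{t\leq t^\ast: w(t)\geq w_0\}$; then $w(t_1)=w_0$ and $w(t)<w_0<0$ on $(t_1,t^\ast]$, so $\dot w=-\mu_\delta(w)\geq 0$ there, making $w$ nondecreasing on $[t_1,t^\ast]$ and contradicting $w(t^\ast)<w_0$. The same argument shows $w(t)\geq 0$ for all later times once $w$ attains the value $0$.

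Next I would prove the trajectory estimate. Fix $x_0\in\S_\delta\setminus\S$, write $w_0=h(x_0)\in[-\delta,0)$, and let $\tilde w$ be the minimal solution of $\dot w=-\mu_\delta(w)$, $w(0)=w_0$. Applying Proposition~\ref{prop:compare} on compact time sub-intervals exactly as in the proof of Theorem~\ref{thmbarrier} (so the possibly shorter interval of existence of $\tilde w$ is handled by the fact that $\eta=h\circ x$ is bounded on compact time sets), one gets $\eta(t)\geq\tilde w(t)\geq w_0$; moreover, if $\tilde w$ reaches $0$ at some time $t_0$, then $h(x(t_0))\geq 0$, i.e. $x(t_0)\in\S$, and Theorem~\ref{thmbarrier} (applied from initial time $t_0$, using time-invariance of \eqref{eq:sys}) keeps $x(t)\in\S$ thereafter. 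In every case $h(x(t))\geq w_0$ for all $t$, so $x(t)\in\S_\delta$ always, and from the upper bound in \eqref{eq:boundedclassk}, $\rho(x(t),\S)\leq\alpha^{-1}(-h(x(t)))\leq\alpha^{-1}(-w_0)\leq\alpha^{-1}\!\big(\beta(\rho(x_0,\S))\big)$, while $\rho(x(t),\S)=0$ identically when $x_0\in\S$ (again Theorem~\ref{thmbarrier}).

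For uniform stability, given a target $\hat\delta>0$, I would first observe that the lower bound in \eqref{eq:boundedclassk} confines a full $\rho$-neighborhood of $\S$ to $\S_\delta$: if $\rho(x_n,\S)\to 0$ but $h(x_n)<-\delta$, then following $h$ along a short path in $\D$ from $x_n$ to a nearby point of $\S$, the last crossing of the level $-\delta/2$ occurs at a point $w_n\in\S_\delta\setminus\S$ with $\rho(w_n,\S)\to 0$ yet $h(w_n)=-\delta/2$, contradicting $h(w_n)\geq-\beta(\rho(w_n,\S))$; hence some $\epsilon_0>0$ satisfies $\{x:\rho(x,\S)<\epsilon_0\}\subseteq\S_\delta$. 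Now pick $\epsilon<\epsilon_0$ small enough that $\beta(\epsilon)<\delta$ and $\alpha^{-1}(\beta(\epsilon))<\hat\delta$. For any $x_0$ with $\rho(x_0,\S)<\epsilon$ we have $x_0\in\S_\delta$; if $x_0\in\S$ then $\rho(x(t),\S)=0$, and if $x_0\in\S_\delta\setminus\S$ the previous paragraph gives $\rho(x(t),\S)\leq\alpha^{-1}(\beta(\rho(x_0,\S)))<\alpha^{-1}(\beta(\epsilon))<\hat\delta$ for all $t$. This is exactly uniform stability.

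For asymptotic stability, assume additionally $\mu(w)<0$ on $[-\delta,0)$ and forward completeness, so that $-\mu_\delta(w)>0$ on $[-\delta,0)$. Take $x_0\in\S_\delta\setminus\S$; the minimal solution $\tilde w$ is now strictly increasing as long as it lies in $[-\delta,0)$. If it reaches $0$ in finite time the trajectory enters $\S$ (and stays, by Theorem~\ref{thmbarrier}), so assume $\tilde w(t)\in[w_0,0)$ for all $t\geq 0$; since its derivative is bounded on $[-\delta,0]$ it is global, and $\tilde w(t)\uparrow L\in(w_0,0]$. If $L<0$, then $-\mu_\delta$ has a positive minimum $m$ on the compact set $[w_0,L]\subset[-\delta,0)$, so $\dot{\tilde w}\geq m$ and $\tilde w(t)\geq w_0+mt\to\infty$, a contradiction; hence $L=0$. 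Therefore $h(x(t))\geq\tilde w(t)\to 0^-$, and using the upper bound in \eqref{eq:boundedclassk} again, $\rho(x(t),\S)\leq\alpha^{-1}(-h(x(t)))\leq\alpha^{-1}(-\tilde w(t))\to\alpha^{-1}(0)=0$. Combined with uniform stability and the inclusion $\{x:\rho(x,\S)<\epsilon_0\}\subseteq\S_\delta$, this yields asymptotic stability. The main obstacle throughout, and the reason the comparison system rather than a Nagumo-type boundary argument is used, is the possible non-uniqueness of solutions of $\dot w=-\mu(w)$ at and near $w=0$: the truncation to $\mu_\delta$ together with the "nondecreasing while negative" lemma are precisely what make the minimal solution a clean lower barrier, and aligning the intervals of existence of $\tilde w$ and $x(\cdot)$ must be carried out as in Theorem~\ref{thmbarrier} rather than by quoting Proposition~\ref{prop:compare} verbatim.
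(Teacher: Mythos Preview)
Your overall strategy---use $-h$ as a Lyapunov-type function on the shell $\S_\delta\setminus\S$, bound $\eta(t)=h(x(t))$ below via the comparison result, and do the $\epsilon$--$\delta$ argument by hand---is sound and close in spirit to the paper's proof, which packages the same monotonicity of $h$ along trajectories into a Lyapunov candidate $\mathcal{V}_h=\max\{-h,0\}$ and then quotes set-stability theorems from Bhatia--Szeg\H{o}. Your version is more self-contained, and your argument for the existence of $\epsilon_0$ with $\{x:\rho(x,\S)<\epsilon_0\}\subseteq\S_\delta$ is a nice touch not made explicit in the paper.

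There is, however, a genuine sign error in the truncation step that breaks the comparison. You set $\mu_\delta(w)=\min\{\mu(w),0\}$ for $w<0$ and then claim that because $\mu_\delta\leq\mu$ pointwise, $L_f h(x)\geq-\mu_\delta(h(x))$ still holds on all of $\D$. The implication goes the wrong way: $\mu_\delta\leq\mu$ gives $-\mu_\delta\geq-\mu$, so from $L_f h\geq-\mu(h)$ you \emph{cannot} conclude $L_f h\geq-\mu_\delta(h)$. Concretely, at any $x$ with $h(x)<-\delta$ and $\mu(h(x))>0$ you have $-\mu_\delta(h(x))=0>-\mu(h(x))$, and the barrier inequality with $\mu_\delta$ may fail there. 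Without that global inequality you cannot invoke Proposition~\ref{prop:compare} for the system $\dot w=-\mu_\delta(w)$, so the bound $\eta(t)\geq\tilde w(t)$ is not justified as written.

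The fix is simply to drop the truncation and work with $\mu$ itself. Apply Proposition~\ref{prop:compare} with $g=-\mu$ (the global barrier condition $L_f h\geq-\mu(h)$ is exactly the hypothesis needed) to get $\eta(t)\geq\tilde w(t)$, where $\tilde w$ is the minimal solution of $\dot w=-\mu(w)$, $w(0)=w_0$. Then prove $\tilde w(t)\geq w_0$ directly: since you already arrange $\beta(\epsilon)<\delta$, you have $w_0=h(x_0)\geq-\beta(\rho(x_0,\S))>-\delta$, and your ``nondecreasing while negative'' argument goes through verbatim on the interval $[-\delta,0)$ (if $\tilde w$ first drops below $w_0$ at time $t_1$, continuity keeps $\tilde w$ in $(-\delta,w_0)\subset[-\delta,0)$ for $t$ just past $t_1$, where $\dot{\tilde w}=-\mu(\tilde w)\geq 0$, a contradiction). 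The asymptotic-stability part then follows exactly as you wrote it, again with $\mu$ in place of $\mu_\delta$. This is essentially how the paper handles it as well: it shows $\dot{\mathcal V}_h\leq\mu(-\mathcal V_h)\leq 0$ only while the trajectory is in the shell, and the monotonicity itself is what traps the trajectory there.
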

\begin{proof}
Define a Lyapunov function candidate
\begin{equation}
    \label{eq:lyapbarrier}
    \mathcal{V}_h(x)=
    \begin{cases}
      0&\text{if $x \in \S$}\\
      -h(x)&\text{if $x \in \D \setminus \S$}.
    \end{cases}
\end{equation}
Notice that for all $x \in \S_\delta$,
\begin{equation}
    \label{eq:classkbound}
    \alpha(\rho(x, \S)) \leq \mathcal{V}_h(x) \leq \beta(\rho(x, \S)).
\end{equation}
Additionally, since $\mathcal{V}_h(x)$ is continuous, $\mathcal{V}_h(x)$ is upper bounded by $\beta(\rho(x, \S))$, and for $\delta > 0$, there exists $\gamma > 0$ such that the set 
\begin{equation}
\label{eq:Sgamma}
\S_{\gamma} = \{x \in \D : \rho(x, \S) < \gamma \}
\end{equation}
is a subset of $\S_\delta$.

We first prove that $\mathcal{V}_h(x)$ satisfies the hypotheses of \cite[Corollary 1.7.5]{bhatia2006dynamical}, namely, that $\mathcal{V}_h(x)$ is upper and lower bounded by class $\mathcal{K}$ functions as in \eqref{eq:classkbound} over $\S_\gamma$ and that $\mathcal{V}_h(x(t)) \leq \mathcal{V}_h(x_0)$ for any $x_0 \in \S_\gamma$ and all $t \in I[x(\cdot)]$, which implies that $\S$ is uniformly stable. If $x_0 \in \S$, then $\mathcal{V}_h(x(t)) = \mathcal{V}_h(x_0) = 0$ for all $t \in I[x(\cdot)]$, as $x(t) \in \S$ for all $t \in I[x(\cdot)]$, which is verified by $h(x)$ being a barrier function for $\S$ with the minimal function $\mu(w)$. If $x(t) \in \S_{\gamma} \setminus \S$ for all $t \in I[x(\cdot)]$, then $\mathcal{V}_h(x(t)) \leq \mathcal{V}_h(x_0)$ for all $t \in I[x(\cdot)]$ because
\begin{equation}
    \dot{\mathcal{V}}_h \leq \mu(-\mathcal{V}_h) \leq 0,
\end{equation}
where the first inequality holds since $\mathcal{V}_h = -h$ on $\S_{\gamma} \setminus \S$ and the second follows by hypothesis. Finally, if $x(\tau)\in \S$ at some time $\tau>0$, then $\mathcal{V}_h(x(t)) = 0$ for all $t > \tau$, so $\mathcal{V}_h(x(t)) \leq  \mathcal{V}_h(x_0)$ for all $t \in I[x(\cdot)]$ as well. So for any $x_0$ in $\S_{\gamma}$, $\mathcal{V}_h(x(t)) \leq \mathcal{V}_h(x_0)$ for all $t \in I[x(\cdot)]$. Therefore $\mathcal{V}_h$ satisfies all the required hypotheses and $\S$ is uniformly stable. 
 
Second, under the further assumptions of the proposition regarding asymptotic stability, we claim that $\mathcal{V}_h$ satisfies all of the hypotheses of \cite[Theorem 1.7.8] {bhatia2006dynamical}, proving that $\S$ is indeed asymptotically stable. In particular, these hypotheses are that there exists an open invariant set $\mathcal{B}$ with $\S_\gamma \subset \mathcal{B}$, where $\S_\gamma$ is as defined in \eqref{eq:Sgamma} and $\gamma > 0$, such that: H1)  $\lim_{t \to \infty}{\mathcal{V}_h(x(t))} = 0$ for any $x_0 \in \mathcal{B}$; H2) $\mathcal{V}_h(x(t)) < \mathcal{V}_h(x_0)$ for all $t \geq 0$ and any $x_0 \in \mathcal{B} \setminus \S$; H3) $\mathcal{V}_h(x)$ is upper and lower bounded by class $\mathcal{K}$ functions as in \eqref{eq:classkbound} over $\mathcal{B}$.

We first claim $\S_{\delta}^{\circ} = \{ x \in \D : h(x) > -\delta \}$ is an open invariant set and observe that there exists a $\gamma > 0$ where $\S_\gamma \subset \S_{\delta}^{\circ}$. By assumption, we have $L_f h(x) \geq -\mu(h(x)) > 0$ for $x\in \D$ such that $-\delta \leq h(x) < 0$. Invoking Case \ref{item:1} in Theorem \ref{thm:minimal} below, $\S_{\delta}$ is invariant, and in fact, $\S_{\delta}^{\circ}$ is invariant as well, due to the fact that $L_f h(x) > 0$ on $h(x) \in [-\delta, 0)$, showing the claim.

If $x_0 \in \S_{\delta}^{\circ} \setminus \S$, then  $\mathcal{V}_h(x(t)) < \mathcal{V}_h(x_0)$ for all $t>0$, since
\begin{equation}
    \dot{\mathcal{V}_h} \leq \mu(-\mathcal{V}_h) < 0,
\end{equation}
where the first inequality holds since $\mathcal{V}_h = -h$ on $\S_{\delta}^{\circ} \setminus \S$ and the second follows by hypothesis, and thus H2) holds. Moreover, we claim that $\lim_{t \to \infty}{\mathcal{V}_h(x(t))} = 0$ for all $x \in \S_{\delta}^{\circ}$. If $x_0 \in \S$, then $\mathcal{V}_h(x(t)) = 0$ for all $t\geq 0$ because $\S$ is invariant. If $x_0 \in \S_{\delta}^{\circ} \setminus \S$, using Proposition \ref{prop:compare} with the comparison system $\dot{w} = -\mu(w)$ and the initial condition $w_0 = h(x_0)$, $h(t) \geq \Tilde{w}(t)$, for the minimal solution $\Tilde{w}(t)$. In addition, as the system \eqref{eq:sys} is assumed to be forward complete, both solutions $\Tilde{w}(t)$ and $h(x(t))$ are defined for all $t \geq 0$. Since $w_0 < 0$ and $\mu(w) < 0$ on $[w_0, 0)$, $\lim_{t \to \infty} \Tilde{w}(t)$ must be non-negative. Thus $\lim_{t \to \infty}{h(x(t))} \geq 0$ and $\lim_{t \to \infty}{\mathcal{V}_h(x(t))} = 0$ as well and H1) holds. Hypothesis H3) holds by the assumption \eqref{eq:classkbound}, and therefore $\mathcal{V}_h$ satisfies all the necessary hypotheses and $\S$ is asymptotically stable.
\end{proof}

\begin{rem}
Notice that if $\S_{\delta} = \{x \in \D : 0 \geq h(x) \geq -\delta \}$ is compact for some $\delta \geq 0$, the condition in \eqref{eq:boundedclassk} holds. This is due to the fact that $\gamma(w) = \inf_{\{x \in \D : 0 \geq h(x) \geq -w\}}{\rho(x, \S)}$ is well defined for $w \leq \delta$, so there exists a class $\mathcal{K}$ function that lower bounds $\gamma(w)$, and similarly for the upper bound. 
\end{rem}

\begin{rem}
The explicit requirement of forward completeness of solutions is only needed if $\S$ is not compact. If $\S$ is compact, this necessarily means that there exists a $\delta > 0$ such that $\S_{\delta}^{\circ}$ is compact as well, since its assumed that $h(x) < -\alpha(\rho(x, \S))$ outside of $\S$. Because it can be shown that $\S_{\delta}^{\circ}$ is invariant, the system \eqref{eq:sys} is automatically forward complete in the domain $\S_{\delta}^{\circ}$.
\end{rem}

Intuitively, $L_f h(x) \geq 0$ on $h \in [-\delta, 0]$ for $\delta \geq 0$ implies stability of the set $\S$, and $L_f h(x) > 0$ implies asymptotic stability. In the special case where $\S$ is a point, we recover the classic Lyapunov condition from Proposition \ref{prop:robust}. Therefore, minimal barrier functions can be regarded as a direct extension of Lyapunov functions.

\section{Comparison Between Boundary Conditions}
\label{sec:Nagumocomp}
In this appendix, we recall a few versions of 
results from Nagumo's theorem and we then compare minimal barrier functions with these alternative approaches.

The following result by Nagumo (and independently discovered by Brezis) gives a tangent condition on the flow of the system relative to the set $\S$ that is necessary and sufficient for positive invariance.

\begin{thm}[Nagumo's Theorem \cite{brezis1970characterization}]
\label{prop:Bouligand}
Given the system \eqref{eq:sys} under the further condition that $f(x)$ is locally Lipschitz, consider a nonempty $\S\subseteq \D$ assumed to be closed relative to $\D$. Then $\S$ is positively invariant if and only if
\begin{equation}
\label{eq:invariant}
\lim_{\epsilon \downarrow 0}{\frac{\rho(x + \epsilon f(x), \mathcal{\S})}{\epsilon}} = 0 \; \text{for all } x\in \S
\end{equation}
where $\rho(x, \S) = \inf\{\vnorm{y - x} : y \in \S\}$ for $x\in \mathcal{D}$.
\end{thm}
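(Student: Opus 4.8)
\emph{Proof proposal.}

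The quantity appearing in \eqref{eq:invariant} is the standard \emph{subtangentiality} condition: it says exactly that $f(x)$ belongs to the Bouligand (contingent) tangent cone of $\S$ at each $x \in \S$. The plan is to prove the two implications separately, with the ``only if'' direction coming from a first-order Taylor expansion of solutions and the ``if'' direction from a Gronwall-type estimate on the distance-to-$\S$ function $v(t) := \rho(x(t),\S)$.

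\emph{Necessity.} Suppose $\S$ is positively invariant and fix $x \in \S$. Since $f$ is locally Lipschitz, there is a unique maximal solution $x(t)$ with $x(0)=x$, and positive invariance gives $x(t)\in\S$ for all $t\in I[x(\cdot)]$. As $x(\cdot)$ is continuously differentiable with $\dot{x}(0)=f(x)$, we have $x(\epsilon)=x+\epsilon f(x)+o(\epsilon)$, hence
\[
\rho(x+\epsilon f(x),\S)\le \|(x+\epsilon f(x))-x(\epsilon)\| = o(\epsilon)
\]
because $x(\epsilon)\in\S$. Dividing by $\epsilon$ and letting $\epsilon\downarrow 0$ gives \eqref{eq:invariant}.

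\emph{Sufficiency.} Assume \eqref{eq:invariant} and fix $x_0\in\S$ with unique maximal solution $x(t)$ on $[0,\tau_{\max})$; set $v(t):=\rho(x(t),\S)$, which is locally Lipschitz with $v(0)=0$ since $x(\cdot)$ is locally Lipschitz and $\rho(\cdot,\S)$ is $1$-Lipschitz. Fix $T<\tau_{\max}$; on a compact neighborhood $\mathcal{N}$ of the compact arc $x([0,T])$ the field $f$ is Lipschitz with some constant $L$, and whenever $v(t)$ is small enough a metric projection $y_t\in\S$ of $x(t)$ exists (using that $\S$ is closed relative to $\D$) and lies in $\mathcal{N}$. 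For such $t$ and small $\delta>0$,
\[
v(t+\delta)\le \rho(y_t+\delta f(y_t),\S)+\|x(t+\delta)-y_t-\delta f(y_t)\|,
\]
and the second term is bounded by $\|x(t+\delta)-x(t)-\delta f(x(t))\|+\|x(t)-y_t\|+\delta\|f(x(t))-f(y_t)\| = o(\delta)+v(t)+\delta L v(t)$, while the first term is $o(\delta)$ by \eqref{eq:invariant} applied at $y_t\in\S$. Hence $D^+v(t):=\limsup_{\delta\downarrow 0}\frac{v(t+\delta)-v(t)}{\delta}\le L\,v(t)$ at every such $t$. A short continuity bootstrap (from $v(0)=0$, $v$ stays small near $0$, so $y_t\in\mathcal{N}$, so $t\mapsto e^{-Lt}v(t)$ has nonpositive upper Dini derivative and is nonincreasing, forcing $v\equiv 0$ there; then extend to the supremum of such an interval) shows $v\equiv 0$ on $[0,T]$, i.e. $x(t)\in\S$. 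As $T<\tau_{\max}$ was arbitrary, $\S$ is positively invariant. Alternatively, once $D^+v\le Lv$ is in hand, one may invoke Proposition~\ref{prop:compare} with the comparison system $\dot{w}=Lw$, $w(0)=0$, whose (unique, hence minimal) solution is $w\equiv 0$.

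\emph{Main obstacle.} Everything delicate lies in the sufficiency argument: justifying existence of the metric projection $y_t$ and that it remains in the compact set where $f$ is Lipschitz (which needs the continuity bootstrap keeping $v$ small on an initial interval), and handling the comparison with the \emph{Dini} derivative of $v$ rather than an ordinary derivative, since $v$ need not be differentiable. The necessity direction and the triangle-inequality split are routine.
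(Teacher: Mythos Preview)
The paper does not prove this theorem at all: it is stated in the appendix as a classical result and attributed directly to \cite{brezis1970characterization}, with no proof given. So there is nothing to compare your argument against in the paper itself.

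That said, your proposal is essentially the standard Brezis--Nagumo proof and is correct in outline. The necessity direction is exactly right. For sufficiency, your decomposition via the triangle inequality and the Dini-derivative estimate $D^+v(t)\le Lv(t)$, followed by the $e^{-Lt}v(t)$ monotonicity argument, is the classical route; the bootstrap you sketch (keeping $v$ small so that a nearest point $y_t\in\S\cap\mathcal{N}$ exists) is the genuine technical point and you have identified it correctly.

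One small caveat on your ``alternatively'' remark: Proposition~\ref{prop:compare} as stated in this paper requires $\eta$ to be \emph{differentiable}, whereas $v(t)=\rho(x(t),\S)$ is only Lipschitz, so you cannot invoke it verbatim. Your primary argument via the upper Dini derivative of $e^{-Lt}v(t)$ avoids this issue and is the right way to close the proof; if you want a comparison-lemma formulation, you would need the Dini-derivative version (as in Lakshmikantham--Leela) rather than Proposition~\ref{prop:compare}.
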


Another condition which uses normal vectors instead of tangent spaces is given below. A vector $n$ is an \emph{outer normal} to $\S$ at $x$ if $n \neq 0$ and if the closed ball with the center $x + n$ and radius $\| n \|$ has exactly one point in common with $\S$ which is $x$.

\begin{prop}[\cite{bony1969principe}]
\label{prop:normal}
Given the system \eqref{eq:sys} under the further condition that $f$ is locally Lipschitz, consider a nonempty $\S\subseteq \D$ assumed to be closed relative to $\D$. Then $\S$ is positively invariant if and only if
\begin{equation}
\label{eq:invariant2}
n\cdot f(x) \leq 0 \quad \forall x \in \S, \  \forall n \in N(x)
\end{equation}
where $N(x)$ is the set of outer normal vectors to $\S$ at $x$. 
\end{prop}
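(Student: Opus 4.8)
The plan is to prove both implications using machinery already assembled in the paper: Nagumo's Theorem (Theorem~\ref{prop:Bouligand}) handles necessity almost immediately, while sufficiency is a comparison/Gronwall argument on the squared distance to $\S$, in the same spirit as Proposition~\ref{prop:compare}. For necessity, suppose $\S$ is positively invariant and fix $x\in\S$ with an outer normal $n\in N(x)$, so the closed ball of radius $\|n\|$ about $x+n$ meets $\S$ only at $x$. By Theorem~\ref{prop:Bouligand}, $f(x)$ lies in the contingent tangent cone of $\S$ at $x$, i.e.\ \eqref{eq:invariant} holds, so there are $\epsilon_k\downarrow 0$ and $v_k\to f(x)$ with $x+\epsilon_k v_k\in\S$. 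Since each $x+\epsilon_k v_k$ lies outside the open ball of radius $\|n\|$ about $x+n$, we get $\|\epsilon_k v_k-n\|^2\ge\|n\|^2$, i.e.\ $\epsilon_k\|v_k\|^2\ge 2\langle v_k,n\rangle$; letting $k\to\infty$ yields $\langle f(x),n\rangle\le 0$, which is \eqref{eq:invariant2}.

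For sufficiency, assume \eqref{eq:invariant2}, fix $x_0\in\S$ and a solution $x(t)$ of \eqref{eq:sys}, and set $V(t)=\rho(x(t),\S)^2$, which is locally Lipschitz. On a short interval $[0,\delta)$ the trajectory and its metric projections onto the closed set $\S$ stay in a compact set on which $f$ is Lipschitz with constant $L$ and bounded by $M$. For each $t$ pick a closest point $p(t)\in\S$ to $x(t)$; since $p(t)\in\S$, $V(t+h)\le\|x(t+h)-p(t)\|^2$, and expanding gives
\[
D^+V(t):=\limsup_{h\downarrow 0}\frac{V(t+h)-V(t)}{h}\le 2\big\langle f(x(t)),\,x(t)-p(t)\big\rangle .
\]
If $x(t)\in\S$ then $V(t)=0$ and $\rho(x(t+h),\S)\le\|x(t+h)-x(t)\|\le Mh$ forces $D^+V(t)\le 0$. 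If $x(t)\notin\S$, write $n_t=x(t)-p(t)\neq 0$; a triangle-inequality computation shows that for every $\lambda\in(0,1)$ the point $p(t)$ is the \emph{unique} closest point of $\S$ to $p(t)+\lambda n_t$, so $\lambda n_t\in N(p(t))$ and \eqref{eq:invariant2} gives $\langle f(p(t)),n_t\rangle\le 0$. Then
\[
\big\langle f(x(t)),n_t\big\rangle=\big\langle f(p(t)),n_t\big\rangle+\big\langle f(x(t))-f(p(t)),n_t\big\rangle\le L\|x(t)-p(t)\|\,\|n_t\|=L\,V(t),
\]
so $D^+V(t)\le 2L\,V(t)$ throughout $[0,\delta)$.

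Since $V\ge 0$, $V(0)=0$, and $D^+V(t)\le 2LV(t)$, a standard comparison lemma for upper Dini derivatives (the one-sided analogue of Proposition~\ref{prop:compare}, with comparison system $\dot w=2Lw$, $w(0)=0$, whose minimal solution is $w\equiv 0$) forces $V\equiv 0$ on $[0,\delta)$, i.e.\ $x(t)\in\S$ there. A maximality argument then finishes: let $[0,T)$ be the largest subinterval of $I[x(\cdot)]$ on which $x(t)\in\S$; if $T<\tau_{\max}$, then $x(T)\in\S$ by closedness of $\S$, and restarting the preceding estimate at $x(T)$ extends the interval, a contradiction. Hence $x(t)\in\S$ for all $t\in I[x(\cdot)]$, so $\S$ is positively invariant.

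I expect the main obstacle to be the sufficiency direction, specifically the non-smoothness of $\rho(\cdot,\S)$ and the possible non-uniqueness of metric projections onto the nonconvex set $\S$: one can neither differentiate $V$ directly nor apply \eqref{eq:invariant2} at an arbitrary closest point. Both difficulties are circumvented by passing to the one-sided derivative $D^+V$ and by the rescaling $n_t\mapsto\lambda n_t$, which converts "a closest point" into "the unique closest point" and thus produces a genuine element of $N(p(t))$ to which the hypothesis applies; one also has to restrict attention to a time window where $f$ is Lipschitz before bootstrapping to all of $I[x(\cdot)]$ via maximality.
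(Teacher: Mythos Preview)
The paper does not supply its own proof of this proposition; it is quoted verbatim from Bony's work as a classical companion to Nagumo's theorem in Appendix~C, purely for comparison purposes. Your argument is essentially the standard proof of Bony's result and is correct: the necessity direction via the contingent cone is clean, and the sufficiency direction via a Gronwall estimate on $\rho(x(t),\S)^2$---together with the rescaling trick $n_t\mapsto\lambda n_t$ to force uniqueness of the projection and hence membership in $N(p(t))$---is exactly Bony's original mechanism.

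Two small caveats worth noting. First, your appeal to Proposition~\ref{prop:compare} is only by analogy: that result assumes $\eta$ is differentiable and bounds $\dot\eta$ from \emph{below}, whereas you need an upper Dini-derivative version with the inequality reversed; this is of course standard (e.g.\ Lakshmikantham--Leela), but it is not literally Proposition~\ref{prop:compare}. Second, existence of the metric projection $p(t)$ is not automatic because $\S$ is only closed relative to the open set $\D$, not in $\R^n$; your restriction to a short interval near $x_0\in\S$ does handle this, since one can intersect $\S$ with a closed ball contained in $\D$ to get a compact set on which the infimum is attained, but it would be worth saying so explicitly.
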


In Theorem \ref{prop:Bouligand} and Proposition \ref{prop:normal}, it is assumed that the vector fields are locally Lipschitz.
Interestingly, uniqueness functions are also used in  \cite{redheffer1975flow}, \cite{redheffer1972theorems} to relax locally Lipschitz assumptions for the flow of the system for unique solutions, and in \cite{maghenem2019sufficient} to relax an inequality similar to \eqref{eq:invariant2}.

Extensions of invariance results from the standard Nagumo's theorem usually relax three conditions, specifically a smoothness assumption of the boundary of $\S$, a tangent condition on the flow of the system at the boundary, and a uniqueness assumption on the dynamics of the system \eqref{eq:sys} \cite{redheffer1975flow}. By adding a smoothness condition for differentiability of the minimal barrier function, a comparison argument can be utilized to relax some of the uniqueness assumptions for the resulting comparison system.

\section{Continuous Comparison Functions}
\label{sec:Necessary}

In this appendix, conditions for a continuous comparison function are outlined. In other words, we are interested in showing when there exists a $\phi$ such that 
\begin{equation}
    L_f h(x) \geq -\phi(h(x))
\end{equation}
and $\phi(0) \leq 0$ and $\phi$ is continuous. This is important when defining a constraint based controller, where it is desired that the resulting controller defined by an optimization program is continuous. See section \ref{sec:Control} for further details. 

\begin{prop}
\label{prop:contminim}
Let $h: \D \to \R$ be a continuously differentiable function. If $0$ is a regular value of $h$ and $\Lambda := \{x \in \D : -\delta \leq h(x) \leq \delta \}$ is compact for all $\delta \geq 0$, then there exists a continuous function $\phi$ such that $\phi(0) \leq 0$ and
\begin{equation}
    L_f h(x) \geq -\phi(h(x)) \quad \forall x \in \D
\end{equation}
\end{prop}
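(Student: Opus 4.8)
The plan is to realise $\phi$ as (a continuous majorant of) the marginal function $\Gamma(w)=\inf_{x:\,h(x)=w}L_f h(x)$ from \eqref{eq:Gamma}, following Lemma \ref{lem:Gamma} and the proof of Theorem \ref{thm:existloc} but demanding only \emph{continuity}, not Lipschitz continuity, of $\Gamma$, so that neither twice-differentiability of $h$ nor local Lipschitzness of $f$ is needed. I view $h^{-1}:W\rightrightarrows\D$, $h^{-1}(w)=\{x\in\D:h(x)=w\}$, as a point-to-set map on the range $W=\{h(x):x\in\D\}$, note that $L_f h=\frac{\partial h}{\partial x}f$ is continuous, first establish continuity of $\Gamma$ on a neighbourhood of $0$, and then patch $-\Gamma$ into a continuous function on all of $\R$ that still dominates $-L_f h$ along $h^{-1}$.

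The soft facts come first. Each $h^{-1}(w)$ is a closed subset of the compact set $\Lambda_{|w|}$, hence compact (and nonempty for $w\in W$). The map $h^{-1}$ is upper semicontinuous: if $w_n\to w$ and $x_n\in h^{-1}(w_n)$, then eventually all $x_n$ lie in a fixed $\Lambda_\delta$, so a subsequence converges to some $x^\star$ with $h(x^\star)=w$ — this closed-graph-plus-compactness argument is precisely the fact this proposition is quoted for inside the proof of Theorem \ref{thm:existloc}, and it uses the compactness hypothesis. In particular $-\Gamma$ is locally bounded above, since $-\Gamma(w)\le\max_{\Lambda_{|w|}}(-L_f h)<\infty$.

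The one substantive step is lower semicontinuity of $h^{-1}$ near $0$, and here the hypothesis that $0$ is a regular value is essential. Since $\frac{\partial h}{\partial x}$ is continuous, $\mathcal{O}=\{x\in\D:\frac{\partial h}{\partial x}(x)\neq 0\}$ is open and contains $h^{-1}(0)$, so by the upper semicontinuity above there is a neighbourhood $U$ of $0$ with $h^{-1}(w)\subseteq\mathcal{O}$ for every $w\in U$; each such $w$ is then itself a regular value, and $U\subseteq W$ by the open mapping property of submersions. Given $w\in U$, an open $G$ meeting $h^{-1}(w)$ at a point $x_0$, and a direction $v$ with $\frac{\partial h}{\partial x}(x_0)\,v>0$, the $C^1$ scalar map $t\mapsto h(x_0+tv)$ has positive derivative at $t=0$ and is thus a local diffeomorphism, so every $w'$ near $w$ is attained by $h$ on a short segment $\{x_0+tv\}\subseteq G$; hence $h^{-1}(w')\cap G\neq\varnothing$, i.e.\ $h^{-1}$ is lower semicontinuous at $w$. (Equivalently, apply Lyusternik's theorem as in the proof of Theorem \ref{thm:existloc}.) With $h^{-1}$ a continuous, compact- and nonempty-valued correspondence on $U$ and $L_f h$ continuous, Berge's maximum theorem yields continuity of $\Gamma(w)=\min_{x\in h^{-1}(w)}L_f h(x)$ on $U$.

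It remains to assemble $\phi$. Set $\phi:=-\Gamma$ on a small neighbourhood of $0$ inside $U$ (continuous there, with $\phi(0)=-\Gamma(0)$), and elsewhere blend, via a partition of unity, $-\Gamma|_U$ with a global continuous majorant of $-\Gamma$ on $W$ — which exists because $-\Gamma$ is dominated by the finite nondecreasing function $s\mapsto\max_{\Lambda_s}(-L_f h)$ — so that $\phi\ge-\Gamma$ on all of $W$ while $\phi\equiv-\Gamma$ near $0$; then extend $\phi$ continuously to $\R$. By construction $\phi(h(x))\ge-\Gamma(h(x))\ge-L_f h(x)$ for all $x\in\D$, and the value $\phi(0)=-\Gamma(0)$ is $\le 0$ exactly when the Nagumo-type boundary condition $L_f h\ge 0$ holds on $h^{-1}(0)$ — equivalently, given that $0$ is a regular value, when $\S$ is positively invariant (Proposition \ref{prop:nagsmooth}) — which I take as part of the running setting (if $h^{-1}(0)=\varnothing$ the constraint at $0$ is vacuous). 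The lower-semicontinuity step is, I expect, the only real obstacle and the sole place where regularity of $0$ is used; the upper semicontinuity, the appeal to Berge's theorem, and the patching are routine soft analysis.
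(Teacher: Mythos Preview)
Your proposal is correct and follows the same strategy as the paper's proof: establish that $h^{-1}$ is a continuous set-valued map near $0$ (u.s.c.\ from compactness of $\Lambda_\delta$ via a closed-graph argument, l.s.c.\ from regularity of $0$ propagating to nearby level values via the local submersion structure), apply Berge's maximum theorem for continuity of $\Gamma$, and then patch $-\Gamma$ with a continuous global majorant. The only difference is presentational---the paper obtains lower semicontinuity by writing $h^{-1}=h^+\cap h^-$ as an intersection of sub- and super-level correspondences and citing results of Hogan and Greenberg--Pierskalla rather than verifying the definition directly as you do---and, as you correctly flag, both proofs tacitly assume invariance of $\S$ (equivalently $L_f h\ge 0$ on $h^{-1}(0)$) to secure $\phi(0)\le 0$.
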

\begin{proof}
We first claim that $h^{-1}$ is a continuous point-to-set map on some open set $U$ containing $0$. Let $W=\{h(x):x\in\D\}$ and $\S=\{x\in \D: h(x)\geq 0\}$. Define two point-to-set maps $h^+,h^-:W\rightrightarrows \D$ according to
\begin{align}
    h^{+}(w) &= \{x \in \D : -h(x) + w \leq 0 \},\\
    h^{-}(w) &= \{x \in \D : h(x) - w \leq 0 \},
\end{align}
and observe that $h^{-1}(w) = h^{+}(w) \cap h^{-}(w)$ for all $w\in W$. Because $0$ is assumed to be a regular value of $h$ and $\partial \S$ is nonempty, $h^+(w)$ and $h^-(w)$ are both nonempty for all $w$ in some neighborhood $V$ of $0$. The functions $-h(x) + w$ and $h(x) - w$ are both continuous on $W \times \D$, so $h^+$ and $h^-$ are closed on $V$ \cite[Thm 10]{hogan1973point}, and so is $h^{-1}$ due to \cite[6.1 Thm 5]{berge1997topological}. Since $h^{-1}(w)$ is assumed to map into the compact set $\Lambda$ for $w \in [-\delta, \delta]$, $h^{-1}(w)$ is also u.s.c  on $V \cap [-\delta, \delta]$ \cite[6.1 Corollary to Thm 7]{berge1997topological}. 

Because $0$ is a regular value of $h$ and $h$ is continuously differentiable, $\frac{\partial h}{\partial x}$ is constant rank in a neighborhood $\mathcal{N}(x)$ of each $x \in h^{-1}(0)$. Defining $G = \bigcup_{x \in h^{-1}(0)}{\mathcal{N}(x)}$ gives an open cover of $h^{-1}(0) \subset G$ and since $h^{-1}$ was shown to be u.s.c at $0$, there exists a neighborhood $V'$ of  $0$ where $h^{-1}(w) \subset G$ for $w \in V'$ by definition of u.s.c. Therefore for any $w \in V'$, $w$ is a regular value of $h$. Now  define 
\begin{align}
    h^+_I(w) &= \{x \in \D : -h(x) + w < 0 \},\\
    h^-_I(w) &= \{x \in \D : h(x) - w < 0 \}.
\end{align}
We now show $\overline{h^+_I(w)} \supseteq h^+(w)$ for all $w \in V'$. Let $w\in V'$ and let $x^*\in h^+(w)$. If $h(x^*) > w$, then by definition, $x^* \in h^+_I(w)$. If $h(x^*) = w$ and $w \in V'$, then $w$ is a regular value of $h$ and there exists a direction $d\in \mathbb{R}^n$ satisfying $\frac{\partial h}{\partial x}(x^*) d > 0$. Let 
\begin{equation}
    F(a) = h(x^* + ad)
\end{equation}
and notice that $F$ is continuous, so $\lim_{a \to 0^+}{F(a)} = w$ and for $a > 0$ sufficiently close to $0$, $F(a) > w$. Therefore, $x^*$ is a limit point of $h^+_I(w)$, and thus $x^* \in \overline{h^+_I(w)}$. A symmetric argument implies $\overline{h^-_I(w)}\supseteq h^-(w)$ for all $w\in V'$. Then, since $-h(x) + w$ and $h(x) - w$ are both continuous on $W \times \D$, $\overline{h^+_I(w)} \supseteq h^+(w)$, and $\overline{h^-_I(w)} \supseteq h^-(w)$, it holds that $h^+$ and $h^-$ are both l.s.c on $V'$ \cite[Thm 13]{hogan1973point}. Because $h^+$ and $h^-$ are l.s.c, $h^+_I(w)$ and $h^-_I(w)$ are nonempty on $V$, and $h^{-1}(w)$ is assumed to be compact for $w \in [-\delta, \delta]$, $h$ is a l.s.c. point-to-set map on $V \cap V' \cap [-\delta, \delta]$  \cite[Thm 3]{greenberg1972extensions}, \cite{evans1970stability}. Finally, we have $h^{-1}$ is both u.s.c and l.s.c on any open set $U\subset V \cap V' \cap [-\delta, \delta]$ and is therefore continuous on $U$.

Now we prove $\Gamma$ is continuous on $U$.
Since $h$ is assumed to be continuously differentiable, and $f$ in \eqref{eq:sys} is assumed to be continuous, $L_f h$
is also continuous. Because $h^{-1}$ is a continuous point-to-set map on $U$, and $L_f h$ is continuous everywhere,
\begin{equation}
    \Gamma(w) = -\sup \{- L_f h(x) : x \in h^{-1}(w) \}
\end{equation}
is continuous on $U$ \cite[Thm 7]{hogan1973point}. 

Because $\Lambda_\delta$ is assumed to be compact for all $\delta \geq 0$ and $L_f h$ is continuous, $\Gamma$ is bounded on $[-\delta, \delta]$ for all $\delta \geq 0$ as well. Therefore, there exists a continuous function $\phi: \R \to \R$ such that $-\phi(w) \leq \Gamma(w)$ for all $w\in W$ and, for some neighborhood $U'\subset U$ of $0$, $\phi$ restricted to $U'$ is equal to $\Gamma$. Furthermore, $L_f h(x) \geq \Gamma(h(x)) \geq -\phi(h(x))$ for all $x\in \D$.
Since $\S$ is assumed to be invariant, $L_f h(x) \geq 0$ for all $x \in h^{-1}(0)$, so $-\phi(0) = \Gamma(0) \geq 0$. Therefore $\phi$ is continuous and $\phi(0) \leq 0$.
\end{proof}

\end{extendedonly}
\end{document}